\newtheorem{theorem}{Theorem}
\newtheorem{definition}[theorem]{Definition}
\newtheorem{lemma}[theorem]{Lemma}
\newtheorem{corollary}[theorem]{Corollary}
\newtheorem{conjecture}[theorem]{Conjecture}
\newcommand{\Mod}[1]{\ \mathrm{mod}\ #1}
\newcommand{\floor}[1]{\left\lfloor #1 \right\rfloor}
\newcommand{\ceil}[1]{\left\lceil #1 \right\rceil}
\newcommand{\pnat}{\mathbb{N}_{\ge 0}}
\newcommand{\eps}{\epsilon}
\newcommand{\prob}{\mathbb{P}}
\newcommand{\Oh}{\mathcal{O}}
\newcommand{\Ot}{\mathcal{\widetilde{O}}}
\newcommand{\dxy}{d\big(x[i], y[(k+i)\; (\text{mod}\; 2n)]\big)}
\newcommand{\naivecitemany}[2]{#1~et~al.~\cite{#2}}
\newcommand{\naiveciteone}[2]{#1~\cite{#2}}
\newcommand{\maxconv}{\textsc{MaxConv}\xspace}
\newcommand{\maxconvupper}{\textsc{MaxConv UpperBound}\xspace}
\newcommand{\maxconvlower}{\textsc{MaxConv LowerBound}\xspace}
\newcommand{\minconv}{\textsc{MinConv}\xspace}
\newcommand{\knapsack}{\textsc{0/1 Knapsack}\xspace}
\newcommand{\knapsackplus}{$\textsc{0/1 Knapsack}^{+}$\xspace}
\newcommand{\subsetsum}{\textsc{SubsetSum}\xspace}
\newcommand{\uknapsack}{\textsc{Unbounded Knapsack}\xspace}
\newcommand{\convsum}{\textsc{3sumConv}\xspace}
\newcommand{\necklace}{$l_\infty$-\textsc{Necklace Alignment}\xspace}
\newcommand{\sparsity}{\textsc{Tree Sparsity}\xspace}
\newcommand{\superadditivity}{\textsc{SuperAdditivity Testing}\xspace}
\newcommand{\mcsp}{\textsc{MCSP}\xspace}
\newcommand{\defproblemu}[3]{
  \vspace{2mm}
  \vspace{1mm}
\noindent\fbox{
  \begin{minipage}{0.95\textwidth}
  #1 \\
  {\bf{Input:}} #2  \\
  {\bf{Task:}} #3
  \end{minipage}
  }
  \vspace{2mm}
}
\newcommand{\footremember}[2]{%
    \footnote{#2}
    \newcounter{#1}
    \setcounter{#1}{\value{footnote}}%
}
\newcommand{\footrecall}[1]{%
    \footnotemark[\value{#1}]%
} 
\begin{document}
\title{On problems equivalent to
(min,+)-convolution\footnote{Extended abstract of this paper was presented at ICALP
2017~\cite{icalp-2017}. Some results of this paper have been published
independently~\cite{kunnemann-icalp2017}. This work is part of a project TOTAL that has received funding from the European Research Council (ERC) under the European Union’s Horizon 2020 research and innovation programme (grant agreement No 677651).}}

\author{%
    Marek Cygan\footremember{MIMUW}{Institute of Informatics, University of
    Warsaw, Poland, \texttt{\{cygan, mucha, k.wegrzycki, m.wlodarczyk\}@mimuw.edu.pl}}
    \and
    Marcin Mucha\footrecall{MIMUW}
    \and
    Karol W\k{e}grzycki\footrecall{MIMUW}
    \and
    Micha\l{} W\l{}odarczyk\footrecall{MIMUW}
}
\date{}

\maketitle

\thispagestyle{empty}
\begin{abstract}
In recent years, significant progress has been made in explaining the apparent hardness of
improving upon the naive solutions for many fundamental polynomially solvable
problems. This progress has come in the form of conditional lower bounds -- reductions from a problem assumed to be hard.
The hard problems include 3SUM, All-Pairs Shortest Path, SAT, Orthogonal Vectors, and others. 

In the $(\min,+)$-convolution problem, the goal is to compute a sequence
$(c[i])^{n-1}_{i=0}$, where $c[k] = $ $\min_{i=0,\ldots,k} $ $\{a[i] $ $+$ $b[k-i]\}$, given
sequences $(a[i])^{n-1}_{i=0}$ and 
$(b[i])_{i=0}^{n-1}$. This can easily be done in $\Oh(n^2)$ time, but no $\Oh(n^{2-\varepsilon})$ algorithm
is known for $\varepsilon > 0$. In this paper, we undertake a systematic study of the $(\min,+)$-convolution problem
as a hardness assumption. 

First, we establish the equivalence of this problem to a group of other problems, including variants of the classic
knapsack problem and problems related to subadditive sequences. The
$(\min,+)$-convolution problem has been used
as a building block in algorithms for many problems, notably problems in stringology. It has also appeared as an
ad hoc hardness assumption. Second, we investigate some of these connections and provide new reductions and other results.
We also explain why replacing this assumption with the SETH might not be possible for some problems.

\end{abstract}

\clearpage
\setcounter{page}{1}

\section{Introduction}

\subsection{Hardness in P}

For many problems, there exist ingenious algorithms that significantly improve upon the
naive approach in terms of time complexity. On the other hand, for some fundamental
problems, the naive algorithms are still the best known or have been improved upon only
slightly. To some extent, this has been explained by the P$\ne$NP conjecture. However, 
for many problems, even the naive approaches lead to polynomial algorithms, and
the P$\ne$NP conjecture does not seem to be particularly useful for proving polynomial lower bounds.

In recent years, significant progress has been made in establishing such bounds, 
conditioned on conjectures other than P$\ne$NP. Each conjecture
claims time complexity lower bounds for a different problem. The main
conjectures are as follows. First, the conjecture that there is no $\Oh(n^{2-\epsilon})$ algorithm for the 3SUM
problem\footnote{We present all problem definitions, together with the known results,
    for these problems in Section~\ref{problems}. This is to keep the introduction
relatively free of technicalities.} implies hardness for
problems in computational geometry~\cite{3sum2} and dynamic
algorithms~\cite{3sum1}. Second, the conjecture that
there is no algorithm $\Oh(n^{3-\eps})$ for All-Pairs Shortest Path (APSP) implies the hardness of determining the graph
radius and graph median and the hardness of some dynamic problems (see~\cite{conjectures-survey}
for a survey of related results). Finally, the Strong Exponential Time Hypothesis (SETH) introduced in~\cite{seth1,seth2}
has been used extensively to prove the hardness of parametrized
problems~\cite{seth-treewidth,fpt-book} and has recently
lead to polynomial lower bounds via the intermediate Orthogonal Vectors problem (see~\cite{ov-conjecture}). 
These include bounds for the Edit Distance~\cite{edit-distance}, Longest Common
Subsequence~\cite{lcs1,lcs2}, and others~\cite{conjectures-survey}.

It is worth noting that in many cases, the results mentioned indicate
not only the hardness of the problem in question but also that it is computationally
equivalent to the underlying hard problem. This leads to clusters of equivalent
problems being formed, each cluster corresponding to a single hardness
assumption (see~\cite[Figure 1]{conjectures-survey}).

As Christos H. Papadimitriou stated, ``\emph{There is nothing wrong with trying
to prove that P=NP by developing a polynomial-time algorithm for an NP-complete
problem. The point is that without an NP-completeness proof we would be trying
the same thing without knowing it!}''~\cite{cytat}. In the same spirit, 
these new conditional hardness results have cleared the polynomial
landscape by showing that there really are not that many hard
problems (for the recent background, see~\cite{williamsICM}).

\subsection{Hardness of \minconv}
\label{sec:contribution}

In this paper, we propose yet another hardness assumption: the \minconv problem.

\defproblemu{\minconv}
{Sequences $(a[i])_{i=0}^{n-1},\, (b[i])_{i=0}^{n-1}$}
{Output sequence $(c[i])_{i=0}^{n-1}$, such that $c[k] = \min_{i+j=k} (a[i]+b[j])$}

This problem has previously been used as a hardness assumption for at least two
specific problems~\cite{mscp,sparsity}, but to the best of our knowledge, no attempts have been made to systematically study the
neighborhood of this problem in the polynomial complexity landscape.

 To be more precise, in all problem definitions, we assume that the input sequences consist of integers in the range $[-W,W]$.
 Following the design of the APSP conjecture~\cite{apsp2},
 we allow $\text{polylog}(W)$ factors in the definition of a subquadratic running time.

\begin{conjecture}\label{conj:minconv}
    There is no $\Oh(n^{2-\varepsilon}\mathrm{polylog}(W))$ algorithm for \minconv when $\varepsilon>0$.
\end{conjecture}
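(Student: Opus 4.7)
The statement is a conjecture rather than a theorem, so a genuine unconditional proof is not expected within current techniques: any explicit super-linear time lower bound for a polynomially solvable problem in the word-RAM model would resolve some of the oldest open questions in fine-grained complexity. The plan is therefore not to prove the conjecture outright but to assemble the same kind of circumstantial evidence that underwrites the 3SUM, APSP, and SETH hypotheses.

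The first ingredient is the algorithmic status quo: the trivial $\Oh(n^2)$ algorithm has been improved only by a sub-polynomial $2^{\Omega(\sqrt{\log n})}$ factor, via a shaving technique analogous to Williams' record for APSP, despite \minconv being an extremely natural target that has been attacked directly and as a subroutine in many contexts. No $\Oh(n^{2-\varepsilon})$ algorithm is known even in restricted regimes that are not already trivially subquadratic, suggesting that a refutation would need a fundamentally new algorithmic idea rather than another round of bit-level tricks.

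The second and main source of evidence is to populate the hardness cluster associated with the conjecture, which is precisely the program the rest of the paper carries out: establishing subquadratic equivalence between \minconv and a range of problems for which independent attempts at subquadratic algorithms have failed --- variants of \knapsack, \sparsity, \necklace, \superadditivity, and others. Each additional equivalence strengthens the conjecture in the same way the equivalence cluster around APSP strengthens the APSP hypothesis: it bundles many isolated open problems into a single one, so that a refutation of the conjecture must break a long list of barriers simultaneously.

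The principal obstacle is that none of these steps upgrade the conjecture to a theorem, and the obvious route of reducing from 3SUM, Orthogonal Vectors, or APSP to \minconv is currently blocked --- no such reduction is known, and \minconv appears to occupy its own complexity neighbourhood, distinct from the established hard-in-P clusters. The best attainable outcome, and the one I would aim for, is to demonstrate that this neighbourhood is large, natural, and interesting enough that the conjecture deserves to be adopted as an independent hardness hypothesis, exactly as the paper proposes.
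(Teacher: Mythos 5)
You have correctly identified that this statement is a conjecture that the paper does not (and cannot, with current techniques) prove; the paper simply proposes it as a hardness assumption and supports it by the same kind of circumstantial evidence you describe---the lack of subquadratic algorithms despite sustained effort, and the cluster of subquadratic equivalences established in Theorem~\ref{thm:main} and the surrounding sections. Your proposal matches the paper's stance and reasoning, so there is nothing to correct.
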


Let us first look at the place occupied by \minconv in the landscape
of established hardness conjectures. Figure~\ref{fig:conjectures} shows known reductions between these conjectures and includes
\minconv. \naivecitemany{Bremner}{necklaces} showed the reduction from \minconv  to
APSP. It is also known~\cite{sparsity} that \minconv can be
reduced to 3SUM by using reductions~\cite{3sum1} and \cite[Proposition 3.4,
Theorem 3.3]{weighted-subgraphs} (we provide the details in the Appendix~\ref{sec:3sum}). Note
that a reduction from 3SUM or APSP to \minconv would imply a reduction between 3SUM and APSP, which is a major open problem in this area of study~\cite{conjectures-survey}. No relation between \minconv and SETH or OV is known. 

\begin{figure}[ht!]
    \centering
    \begin{tikzpicture}
    \begin{scope}[every node/.style={rectangle,thick,draw}]
        \node (APSP) at (0,0) {APSP};
        \node (OV) at (3,0) {OV};
        \node (SETH) at (3,1.5) {SETH};
        \node (SUM) at (6,0) {3SUM};
        \node (MINCONV) at (3,-2) {\minconv};
    \end{scope}

    \begin{scope}[every node/.style={},
                  every edge/.style={draw=black,very thick}]
        \path [->] (SETH) edge node[left] {\cite{ov-conjecture}} (OV);
        \path [->] (OV) edge node {\scalebox{2}{$\color{red} \times$}} node[yshift=0.5cm] {\cite{nondet-seth}} (SUM);
        \path [->] (OV) edge node {\scalebox{2}{$\color{red} \times$}} node[yshift=0.5cm] {\cite{nondet-seth}} (APSP);
        \path [->] (MINCONV) edge[bend left=10] node[below] {\cite{necklaces}} (APSP);
        \path [->] (MINCONV) edge[bend right=10] node[below, right]
        {\cite{sparsity,3sum1,weighted-subgraphs}} (SUM);
    \end{scope}
    \end{tikzpicture}
    \vspace{-5pt}

    \caption{The relationship between popular conjectures. A reduction from OV
        to 3SUM or APSP contradicts the nondeterministic version of SETH~\cite{nondet-seth, conjectures-survey}
        (these arrows are striked out).}
    \label{fig:conjectures}
\end{figure}
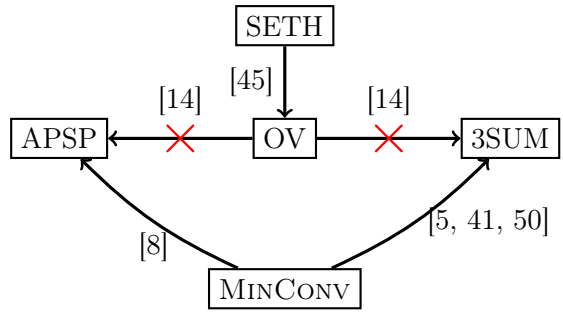

In this paper, we study three broad categories of problems. The
first category consists of the classic \knapsack and its variants, which
we show to be essentially equivalent to \minconv. This is perhaps somewhat
surprising given the recent progress of Bringmann~\cite{bringmann} for \subsetsum,
which is a special case of \knapsack. However, note that Bringmann's
algorithm~\cite{bringmann} (as well as other efficient solutions for
\subsetsum) is built upon the idea of composing solutions using the
$(\vee,\wedge)$-convolution, which can be implemented efficiently using a Fast
Fourier Transform (FFT). The corresponding composition operation for
\knapsack is \minconv (see Section~\ref{knapsack-bringmann} for details). 

The second category consists of problems directly related to \minconv. This includes 
decision versions of \minconv and problems related to the notion of subadditivity. Any subadditive sequence
$a$ with $a[0] = 0$ is an idempotent of \minconv; thus, it is perhaps unsurprising that these problems are equivalent to \minconv.

Finally, we investigate problems that have previously been shown to be related to \minconv and then contribute 
some new reductions or simplify existing ones. Moreover, some of the results of
this paper have been published independently by \citet{kunnemann-icalp2017} at
the same conference.

\section{Problem definitions and known results}
\label{problems}

\subsection{3SUM}

\defproblemu{\textsc{3sum}}
{Sets of integers $A,B,C$, each of size $n$}
{Decide whether there exist $a\in A,\,b\in B,\,c\in C$ such that $a+b=c$}

The \textsc{3sum} problem is the first problem that was considered
as a hardness assumption in~P.
It admits a simple $\Oh(n^2\log{n})$ algorithm, but the existence of an $\Oh(n^{2-\epsilon})$ algorithm
remains a major open problem.
The first lower bounds based on the hardness of \textsc{3sum} appeared in 1995~\cite{3sum2};
some other examples can be found in~\cite{3sum-best, 3sum1, weighted-subgraphs}.
The current best algorithm for \textsc{3sum} runs in slightly subquadratic
expected time $\Oh\left((n^2/\log^2{n})(\log\log{n})^2\right)$~\cite{3sum-best}.
An~$\Oh\left(n^{1.5}\mathrm{polylog}(n)\right)$
algorithm is possible on a nondeterministic
Turing machine~\cite{nondet-seth} (see Section~\ref{prelim} for the definition of
nondeterministic algorithms).

The \textsc{3sum} problem is known to be subquadratically equivalent to its
convolution version in a randomized setting~\cite{3sum1}.

\defproblemu{\convsum}
{Sequences $a,b,c$ of integers, each of length $n$}
{Decide whether there exist $i,j$ such that $a[i]+b[j]=c[i+j]$}

\noindent Both problems are sometimes considered with real weights, but in this work,
we restrict them to only the integer setting.

\subsection{\minconv}
\label{minconv-algorithms}

We have already defined the \minconv problem in Subsection~\ref{sec:contribution}.
Note that it is equivalent (just by negating elements) to the analogous \maxconv problem.

\defproblemu{\maxconv}
{Sequences $(a[i])_{i=0}^{n-1},\, (b[i])_{i=0}^{n-1}$}
{Output sequence $(c[i])_{i=0}^{n-1}$, such that $c[k] = \max_{i+j=k} (a[i]+b[j])$}

\noindent We describe our contribution in terms of \minconv, as this version has been already been heavily studied.
However, in the theorems and proofs, we use \maxconv, as it is easier to work with.
We also work with a decision version of the problem. Herein, we will use
$a \oplus^{\max} b$ to denote the \maxconv of two sequences $a$ and $b$.

\defproblemu{\maxconvupper}
{Sequences $(a[i])_{i=0}^{n-1},\, (b[i])_{i=0}^{n-1}, \,(c[i])_{i=0}^{n-1}$}
{Decide whether $c[k] \ge \max_{i+j=k} (a[i]+b[j])$ for all $k$}

If we replace the latter condition with $c[k] \le \max_{i+j=k} (a[i]+b[j])$,
we obtain a similar problem \maxconvlower.
Yet another statement of a decision version asks whether a given sequence
is a self upper bound with respect to \maxconv, i.e., if it is superadditive.
From the perspective of \minconv, we may ask an analogous question about being subadditive
(again, equivalent by negating elements).
As far as we know, the computational complexity of these problems has not yet been studied.

\defproblemu{\superadditivity}
{A sequence $(a[i])_{i=0}^{n-1}$}
{Decide whether $a[k] \ge \max_{i+j=k} (a[i]+a[j])$ for all $k$}

In the standard $(+,\cdot)$ ring, convolution can be computed in $\Oh(n\log{n})$ time
by the FFT.
A natural way to approach \minconv would be to design an analogue of FFT in the $(\min,+)$-semi-ring,
also called the \emph{tropical semi-ring}\footnote{In this setting, \minconv is often called a $(\min,+)$-convolution, inf-convolution or epigraphic sum.}. 
However, due to the lack of an inverse for the $\min$-operation, it is unclear if such a transform exists for general sequences.
When restricted to convex sequences, one can use a tropical analogue of FFT, namely, the
Legendre-Fenchel transform~\cite{fenchel}, which can be performed in linear time~\cite{legendre-linear}.
\cite{3sum-reporiting-esa} also considered sparse variants of convolutions and
their connection with \textsc{3sum}.

There has been a long line of research dedicated to improving upon the $\Oh(n^2)$ algorithm for
\minconv.
\naivecitemany{Bremner}{necklaces} presented an $\Oh(n^2/\log{n})$ algorithm for \minconv,
as well as a reduction from \minconv to APSP~\cite[Theorem 13]{necklaces}.
\naiveciteone{Williams}{apsp-circuit} developed an $\Oh({n^3}/{2^{\Omega(\log{n})^{1/2}}})$
algorithm for APSP, which can also be used to obtain an
$\Oh(n^2/2^{\Omega(\log{n})^{1/2}})$ algorithm for \minconv~\cite{clustered-3sum}.

Truly subquadratic algorithms for \minconv exist for monotone increasing
sequences with integer values bounded by $\Oh(n)$. \naiveciteone{Chan and
Lewenstein}{clustered-3sum} presented an~$\Oh(n^{1.859})$
randomized algorithm and an~$\Oh(n^{1.864})$ deterministic algorithm for this
case. They exploited ideas from additive combinatorics.
\naivecitemany{Bussieck}{bussieck} showed that for a random input, \minconv can be
computed in $\Oh(n\log{n})$ expected and $\Oh(n^2)$ worst-case time.

If we are satisfied with computing $c$ with a relative error $(1+\eps)$,
then the general \minconv admits a nearly linear algorithm~\cite{sparsity, zwick}.
It could be called an FPTAS (fully polynomial-time approximation scheme),
noting that this name is usually reserved for single-output problems
for which decision versions are NP-hard. 

Using the techniques of \naivecitemany{Carmosino}{nondet-seth} and the reduction from \maxconvupper to \textsc{3sum}
(see Appendix~\ref{sec:3sum}), one can construct an
$\Oh\left(n^{1.5}\mathrm{polylog}(n)\right)$
algorithm that works on nondeterministic Turing machines for \maxconvupper (see
Lemma~\ref{cor:maxconv-nondet}).
This running time matches the $\Oh(n^{1.5})$ algorithm for \minconv in the nonuniform decision tree
model~\cite{necklaces}.
This result is based on the techniques of~Fredman~\cite{fredman,fredman2}.
It remains unclear how to transfer these results to the word-RAM model~\cite{necklaces}.

\subsection{\textsc{Knapsack}}

\defproblemu{\knapsack}
{A set of items $\mathcal{I}$ with given integer weights and values $\left((w_i,v_i)\right)_{i \in \mathcal{I}}$, capacity $t$}
{Find the maximum total value of a subset $\mathcal{I'} \subseteq \mathcal{I}$ such that $\sum_{i \in \mathcal{I'}}w_i \le t$}

If we are allowed to take multiple copies of a single item, then we obtain the \uknapsack problem.
The decision versions of both problems are known to be NP-hard~\cite{garey1979computers}, but
there are classical algorithms based on dynamic programming with a pseudopolynomial
running time $\Oh(nt)$~\cite{bellman}.

In fact, they are used to solve more general problems, i.e., $\textsc{0/1 Knapsack}^{+}$ and $\textsc{Unbounded Knapsack}^{+}$,
where we are asked to output answers for each $0 < t' \le t$.
There is also a long line of research on FPTAS for \textsc{Knapsack}, with the current best running times being
$\Ot(n+\frac{1}{\eps^{2.4}})$ for \knapsack~\cite{chan-sosa}
and $\Ot(n + \frac{1}{\eps^2})$ for \uknapsack~\cite{uknapsack-ptas}.

\subsection{Other problems related to \minconv}
\label{sec:otherproblems}

\defproblemu{\sparsity}
{A rooted tree $T$ with a weight function $w:V(T) \rightarrow\pnat$, parameter $k$}
{Find the maximum total weight of a rooted subtree of size $k$}

The \sparsity problem admits an $\Oh(nk)$ algorithm, which was at first invented for
the restricted case of balanced trees~\cite{cartis2013exact} and then later generalized~\cite{sparsity}.
There is also a nearly linear FPTAS
based on the FPTAS for \minconv~\cite{sparsity}.
It is known that an $\Oh(n^{2-\eps})$ algorithm for \sparsity entails
a subquadratic algorithm for \minconv~\cite{sparsity}.

\defproblemu{\textsc{Maximum Consecutive Subsums Problem} (\mcsp)}
{A sequence $(a[i])_{i=0}^{n-1}$}
{Output the maximum sum of $k$ consecutive elements for each $k$}

There is a trivial $\Oh(n^2)$ algorithm for \mcsp
and a nearly linear FPTAS based on the FPTAS for \minconv~\cite{approximate-mscp}.
To the best of our knowledge, this is the first problem to have been
explicitly proven to be subquadratically equivalent to \textsc{MinConv}~\cite{mscp}.
Our reduction to \superadditivity allows us to significantly simplify the proof
(see Section~\ref{sec:mcsp}).

\defproblemu{$l_p$-\textsc{Necklace Alignment}}
{Sequences $(x[i])_{i=0}^{n-1},\, (y[i])_{i=0}^{n-1} \in [0,1)^n$ describing locations of beads on a circle}
{Output the cost of the best alignment in the $p$-norm, i.e., $\sum_{i=0}^{n-1}
d\left(x[i] + c, y[i+s \; (\text{mod} \; n)]\right)^p$,
where $c \in [0,1)$ is a circular offset, $s \in \{0,\ldots,n-1\}$ is a shift, and $d$ is a distance function on a circle}

In the $l_p$-\textsc{Necklace Alignment} problem, we are given two sorted sequences
of real numbers $(x[i])_{i=0}^{n-1}$ and $(y[i])_{i=0}^{n-1}$ that represent
two necklaces. We assume that each number in the sequence represents a point on a
circle (we refer to this circle as the \emph{necklace} and the points on it as the
\emph{beads}). The distance between beads
$x_i$ and $y_j$ is defined in~\cite{necklaces} as:

\begin{displaymath}
    d(x_i,y_j) = \min \{ |x_i - y_j|, (1- |x_i - y_j|) \}
\end{displaymath}

to represent the minimum between the clockwise and counterclockwise distances
along the circular necklaces. The $l_p$-\textsc{Necklace Alignment} is an
optimization problem where we can manipulate two parameters. The first parameter
is the offset $c$, which is the clockwise rotation of the necklace
$(x[i])_{i=0}^{n-1}$ relative to the necklace $(y[i])_{i=0}^{n-1}$.
The second parameter is the shift $s$, which defines the perfect matching
between beads from the first and second necklaces, i.e., bead $x[i]$ matches
bead $y[i+s \; (\text{mod} \; n)]$ (see~\cite{necklaces}).

For $p=\infty$, we are interested in bounding the maximum distance between any two matched beads.
The problem initially emerged for $p=1$ during research on the geometry of musical rhythm~\cite{music}.
The family of \textsc{Necklace Alignment} problems was systematically studied by~\naivecitemany{Bremner}{necklaces}
for various values of $p$.
For $p=2$, they presented an $\Oh(n\log n)$ algorithm based on the FFT.
For $p=\infty$, the problem was reduced to \minconv, which led to a slightly subquadratic algorithm.
This makes $l_\infty$-\textsc{Necklace Alignment} a natural problem to study in the context of
\minconv-based hardness. Interestingly, we are not able to show such hardness, which
presents an intriguing open problem. Instead we reduce $l_\infty$-\textsc{Necklace Alignment} to a related problem. 

Although it is more natural to state the problem with inputs from $[0,1)$,
we find it more convenient to work with integer sequences that describe a necklace after scaling.


Fast $o(n^2)$ algorithms for \minconv have also found applications in text algorithms.
\naiveciteone{Moosa and Rahman}{strings1} reduced \emph{Indexed Permutation Matching}
to \minconv and obtained an $o(n^2)$ algorithm. \naivecitemany{Burcsi}{strings3}
used \minconv to obtain faster algorithms for \emph{Jumbled Pattern
Matching} and described how finding dominating pairs can be used to solve
\minconv. Later, \naivecitemany{Burcsi}{strings2} showed that fast \minconv can also be used to obtain faster
algorithms for a decision version of \emph{Approximate Jumbled Pattern
Matching} over binary alphabets.

\section{Summary of new results}

Figure~\ref{fig:our-conjectures} illustrates the technical contributions of this paper.
The long ring of reductions on the left side of the Figure~\ref{fig:our-conjectures} is summarized below.

\begin{theorem}
    \label{thm:main}
    The following statements are equivalent:
    \begin{enumerate}
        \item There exists an $\Oh(n^{2-\varepsilon})$ algorithm for \maxconv for some $\varepsilon>0$.
        \item There exists an $\Oh(n^{2-\varepsilon})$ algorithm for \maxconvupper for some $\varepsilon>0$.
        \item There exists an $\Oh(n^{2-\varepsilon})$ algorithm for \superadditivity for some $\varepsilon>0$.
        \item There exists an $\Oh((n+t)^{2-\varepsilon})$ algorithm for \uknapsack for some $\varepsilon>0$.
        \item There exists an $\Oh((n+t)^{2-\varepsilon})$ algorithm for \knapsack for some $\varepsilon>0$.
    \end{enumerate}
    We allow randomized algorithms.
\end{theorem}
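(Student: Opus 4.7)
The plan is to prove the five-way equivalence by a cycle of subquadratic-preserving reductions. The easy directions---in the sense that a fast algorithm for the source yields one for the target---are immediate on one side of the cycle: $(1)\Rightarrow(2)$ holds because a \maxconv algorithm computes the true max-convolution and compares it with the candidate $c$; $(2)\Rightarrow(3)$ holds because \superadditivity is literally the special case $a=b=c$ of \maxconvupper; $(1)\Rightarrow(5)$ follows from the classical divide-and-conquer algorithm for \knapsack which splits the items into halves, recursively computes each half's value/weight profile, and combines via one \maxconv call; and $(5)\Rightarrow(4)$ follows from the textbook doubling trick that replaces each \uknapsack item by $O(\log t)$ items of weights and values $2^k w_i, 2^k v_i$. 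The substance of the proof therefore lies in the opposite-direction reductions: $(3)\Rightarrow(2)$, $(5)\Rightarrow(1)$ and $(4)\Rightarrow(1)$, plus the decision-to-search step $(2)\Rightarrow(1)$.

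For $(3)\Rightarrow(2)$, I would glue the three inputs $a,b,c$ of a \maxconvupper instance into a single sequence $s$ of length $\Theta(n)$ by placing them on three disjoint index blocks chosen so that, for every $i,j\in[0,n)$, the positions of $a[i]$ and $b[j]$ sum to the position of $c[i+j]$. Large, carefully tuned per-block additive offsets are added so that every cross-pair constraint of type $(a,a)$, $(b,b)$, $(c,c)$, $(a,c)$, or $(b,c)$ is satisfied with slack, leaving exactly the $(a,b) \to c$ inequalities to be tested; thus $s$ is superadditive iff $c$ upper-bounds the max-convolution of $a$ and $b$. For the \knapsack-side reductions, the entries of $a$ and $b$ are encoded as items with weights in two disjoint intervals so that any knapsack selection at capacity $k$ picks exactly one index from each sequence with total value $a[i]+b[j]$ for $i+j=k$; sweeping through the relevant range of capacities (or reading the value profile that the standard DP produces for free) recovers the whole convolution. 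For the unbounded variant, additional constant offsets added to $a,b$ make picking two items from the same group never optimal.

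The most delicate link, and in my view the main obstacle, is the decision-to-search reduction $(2)\Rightarrow(1)$: reconstructing the entire output $c^*[k] = \max_{i+j=k}(a[i]+b[j])$ from the single-bit \maxconvupper oracle. Naive coordinate-wise binary search would cost $\Omega(n\log M)$ oracle calls and so lose a factor of $n$. My planned approach is a divide-and-conquer on the index set: split each of $a$ and $b$ into two halves, recurse on the four sub-convolutions, and verify the merged candidate with only $O(1)$ carefully shifted \maxconvupper queries per recursion level. Making this accounting tight---so that the overall overhead is only polylogarithmic in $n$ and in the maximum input magnitude---is the crux of the argument and the step most sensitive to the details of the oracle. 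Once it is in place, composing the reductions above closes the cycle and establishes the equivalence of all five problems.
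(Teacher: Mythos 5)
Your overall framework---a cycle of subquadratic-preserving reductions---matches the paper's, and several of your trivial directions are correct as stated: $(1)\Rightarrow(2)$ by computing the convolution and comparing, $(2)\Rightarrow(3)$ since \superadditivity is \maxconvupper with $a=b=c$, and $(5)\Rightarrow(4)$ via the binary-multiplicity doubling trick (this is exactly Theorem~\ref{uknapsack-knapsack}). Your sketch of $(3)\Rightarrow(2)$---gluing $a,b,c$ into one sequence on disjoint blocks with carefully tuned per-block offsets so that only the $(a,b)\to c$ cross-constraints survive---is also the paper's approach (Theorem~\ref{maxconvupper-superadditivity}, which uses a $4n$-length sequence with offsets $0, K, 4K, 5K$).

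However, there are two genuine gaps, one of which you dismiss as trivial. First, $(1)\Rightarrow(5)$ does \emph{not} ``follow from the classical divide-and-conquer algorithm.'' If you split the items in half, recursively build value/weight profiles of length $t+1$, and merge with one \maxconv call, you get a recursion tree with $\Theta(n)$ merges of cost up to $T(t)$ each, hence a bound of order $n\cdot T(t)$ (or, via the Koiliaris--Xu amortization, $T(\sigma)\log n$ where $\sigma$ is the sum of all weights). With $n\approx t$ and $T(n)=n^{2-\eps}$, this is $\Omega(n^{3-\eps})$, which is not $\Oh((n+t)^{2-\eps'})$ for any $\eps'>0$. This is precisely why the paper devotes Appendix~\ref{knapsack-bringmann} to retracing Bringmann's \subsetsum algorithm for \knapsack, replacing FFT by \maxconv and using the two nontrivial ingredients---Color Coding and Layer Splitting---to bring the cost down to $\Ot(T(t\log t))$. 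Without these ideas your $(1)\Rightarrow(5)$ does not close the loop.

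Second, your $(2)\Rightarrow(1)$ sketch does not describe a working reduction. Splitting $a$ and $b$ into halves and recursing on the four sub-convolutions already computes the answer exactly (by taking entrywise max of the shifted results), so there is nothing for the \maxconvupper oracle to ``verify''; conversely, a candidate merged sequence that the oracle rejects tells you only that \emph{some} entry is too small, not which one or by how much. The paper's Theorem~\ref{maxconv-maxconvupper} handles this with a concrete two-stage scheme: binary search on prefixes locates a single violated coordinate in $\Oh(\log n)$ oracle calls, and a Vassilevska--Williams-style decomposition of $[0,n)$ into $\sqrt n$ blocks of size $\sqrt n$ lets one enumerate and ``cap off'' all violated coordinates in $\Oh(T(\sqrt n)\, n\log n)$ time, after which $\log W$ rounds of per-coordinate binary search reconstruct the output. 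You should also reconsider the direction $(4)\Rightarrow(1)$ or $(5)\Rightarrow(1)$: reducing the \emph{output} problem \maxconv directly to \knapsack requires sweeping $\Theta(n)$ capacities, i.e.\ $\Theta(n)$ knapsack calls, which again destroys subquadraticity. The paper avoids this by reducing the \emph{decision} problem \superadditivity to a single \uknapsack instance (Theorem~\ref{superadditivity-uknapsack}) and entering the output-reconstruction machinery only once, at the $(2)\Rightarrow(1)$ step.
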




\begin{figure}[ht!]
    \centering
    \begin{tikzpicture}
    \begin{scope}[every node/.style={rectangle,thick,draw}]
        \node (minconv) at (0,0) {\maxconv};
        \node (maxconvupper) at (-3,2) {\maxconvupper};
        \node (superadditivity) at (-6,0.75) {\superadditivity};
        \node (uknapsack) at (-6,-0.75) {\uknapsack};
        \node (knapsack) at (-3,-2) {\knapsack};

        \node (mscp) at (3,0) {\mcsp};
        \node (minconvupper) at (0,3.5) {\maxconvlower};
        \node (sparsity) at (3,-3) {\sparsity};
        \node (necklace) at (2,2) {\necklace};
    \end{scope}

    \begin{scope}[every node/.style={},
                  every edge/.style={draw=red,very thick}]

        \path [->] (minconv)            edge[bend right=10] node[right]      {\ref{maxconv-maxconvupper}} (maxconvupper);
        \path [->] (maxconvupper)       edge[bend right=10] node[left]       {\ref{maxconvupper-superadditivity}} (superadditivity);
        \path [->] (superadditivity)    edge[bend right=10] node[left]       {\ref{superadditivity-uknapsack}} (uknapsack);
        \path [->] (uknapsack)          edge[bend right=10] node[left,below] {\ref{uknapsack-knapsack}} (knapsack);
        \path [->] (knapsack)           edge[bend right=10] node[right]      {\ref{knapsack-maxconv}} (minconv);

        \path [->] (sparsity) edge[bend left=30] node[left]          {\ref{thm:sparsity}} (minconv);
        \path [->] (minconvupper) edge[bend left=10] node[right]    {\ref{thm:necklace}} (necklace);
    \end{scope}
    \begin{scope}[every node/.style={},
                  every edge/.style={draw=black,very thick,dashed}]
        \path [->] (minconv) edge[bend left=10] node[right]          {\cite{sparsity}} (sparsity);
        \path [->] (necklace) edge[bend left=10] node[left]          {\cite{necklaces}} (minconv);

        \path [->] (minconv) edge[bend left=10] node[above]          {\cite{mscp}} (mscp);
        \path [->] (mscp)    edge[bend left=10] node[below]          {\cite{mscp}} (minconv);
    \end{scope}

    \end{tikzpicture}
    \caption{Summary of reductions in the \minconv complexity class. An arrow from problem $A$ to $B$ denotes a reduction from $A$ to $B$. Black dashed arrows were previously known, while red arrows correspond to new results. Numbers next to the red arrows indicate the corresponding theorems. The only randomized reduction is in the proof of Theorem~\ref{knapsack-maxconv}.}
    \label{fig:our-conjectures}
\end{figure}
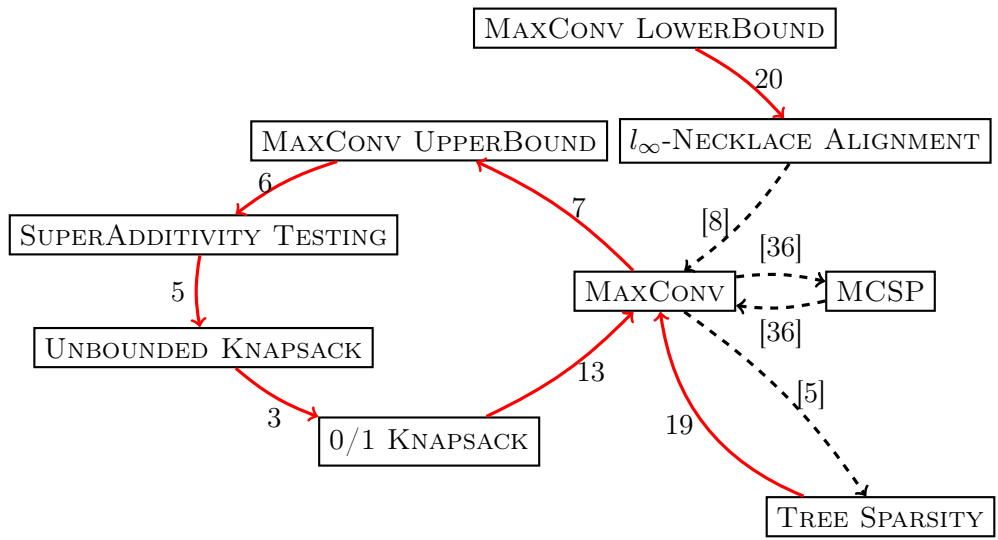



Theorem~\ref{thm:main} is split into five implications, presented separately as 
Theorems~\ref{uknapsack-knapsack},~\ref{superadditivity-uknapsack},~\ref{maxconvupper-superadditivity},~\ref{maxconv-maxconvupper}
and~\ref{knapsack-maxconv}.
While Theorem~\ref{thm:main} has a relatively short and simple statement, it is not the
strongest possible version of the equivalence. In particular, one can show analogous
implications for subpolynomial improvements, such as the
$\Oh(n^2/2^{\Omega(\log n)^{1/2}})$ algorithm for \minconv presented by Williams~\cite{apsp-circuit}.
The theorems listed above contain stronger versions of the implications.
The proof of Theorem~\ref{maxconv-maxconvupper} has
been independently given in~\cite{sparsity}. We present it here
because it is the first step in the ring of reductions and introduces
the essential technique of~\naiveciteone{Vassilevska and Williams}{vassilevska2010subcubic}.

Section~\ref{sec:other} is devoted to the remaining arrows in Figure~\ref{fig:our-conjectures}.
In Subsection~\ref{sec:mcsp}, we show that by using Theorem~\ref{thm:main},
we can obtain an alternative proof of the equivalence of \mcsp and \maxconv (and thus also \minconv), which is much simpler
than that presented in~\cite{mscp}. 
In Subsection~\ref{sec:sparsity}, we show that \sparsity reduces to \maxconv, complementing the
opposite reduction shown in~\cite{sparsity}. 
We also provide some observations on the
possible equivalence between \necklace and \maxconv in Subsection~\ref{sec:necklace}. 

The relation between \maxconv and \textsc{3sum} implies that we should not expect
the new conjecture to follow from SETH. In Section~\ref{sec:nondet}, we exploit the revealed connections between problems
to show that it might also not be possible to replace the hardness assumption for \uknapsack with SETH.
More precisely, we prove that there can be no deterministic reduction from \textsc{SAT} to \uknapsack
that would rule out running time $\Oh(n^{1-\varepsilon}t)$ under the assumption of NSETH.

\section{Preliminaries}
\label{prelim}

We present a series of results of the following form:
if a problem $\mathcal{A}$ admits an algorithm with running time $T(n)$,
then a problem $\mathcal{B}$ admits an algorithm with running time $T'(n)$,
where function $T'$ depends on $T$ and $n$ is the length of the input.
Our main interest is to show that $T(n) = \Oh(n^{2-\eps}) \Rightarrow T'(n) = \Oh(n^{2-\eps'})$.
Some problems, in particular \textsc{Knapsack}, have no simple
parametrization, and we allow function $T$ to take multiple arguments.

In this paper, we follow the convention of~\cite{nondet-seth} and say that the
decision problem $L$ admits a nondeterministic algorithm in time $T(n)$ if $L
\in \text{NTIME}(T(n)) \cap \text{co-NTIME}(T(n))$.

We assume that for all studied problems, the input consists of a list of integers
within $[-W,W]$. 
Since Conjecture~\ref{conj:minconv} is oblivious to $\text{polylog}(W)$ factors, we omit $W$ as a running time
parameter and allow function $T$ to hide factor $\text{polylog}(W)$ for the sake of readability.  
We also use $\Ot$~notation to explicitly hide
polylogarithmic factors with respect to the argument.
Herein, we will use $a\oplus^{\max}b$ to denote the \maxconv of
sequences $a,b$ (see Subsection~\ref{piknapsack}).

As the size of the input may increase during our reductions,
we restrict ourselves to a class of functions satisfying $T(cn) = \Oh(T(n))$ for a constant $c$.
This is justified, as we focus on functions of the form $T(n) = n^\alpha$.
In some reductions, the integers in the new instance may increase to $\Oh(nW)$.
In these cases, we multiply the running time by $\text{polylog}(n)$ to take into account the overhead
of performing arithmetic operations. 
All logarithms are base 2.

\section{Main reductions}
\label{sec:main}

\begin{theorem}[\uknapsack $\rightarrow$ \knapsack]
\label{uknapsack-knapsack}
    A $T(n,t)$ algorithm for \knapsack implies an $\Oh\left(T(n\log{t},t)\right)$ algorithm for \uknapsack.
\end{theorem}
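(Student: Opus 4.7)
The plan is to apply the classical binary-representation trick, reducing \uknapsack to a single call of \knapsack on an instance with a logarithmic blow-up in the number of items.

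First, given an \uknapsack instance with items $(w_i,v_i)_{i=1}^{n}$ and capacity $t$, I build a \knapsack instance by replacing each item $i$ with the collection $\{(2^j w_i, 2^j v_i) : 0 \le j \le K_i\}$, where $K_i = \floor{\log_2(t/w_i)}$. The capacity is kept equal to $t$, and the total number of new items becomes $\sum_i (K_i+1) = \Oh(n \log t)$.

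Next I would verify that the two optima coincide. Starting from an optimal Unbounded solution with multiplicities $c_i \in \{0,1,\ldots,\floor{t/w_i}\}$, its binary expansion $c_i = \sum_j b_{i,j} 2^j$ uses only indices $j \le K_i$ (since $c_i < 2^{K_i+1}$) and so picks a subset of the new items of identical value and total weight. Conversely, any feasible 0/1 selection determines multiplicities $c_i = \sum_j b_{i,j} 2^j$ whose aggregate weight matches the 0/1 weight, hence $c_i w_i \le t$ and so $c_i \le \floor{t/w_i}$; thus the 0/1 selection directly encodes a feasible Unbounded solution of the same value.

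Running the assumed $T(n,t)$-time algorithm for \knapsack on this expanded instance then takes $T(\Oh(n \log t), t)$ time, which under the Preliminaries' convention on polynomial scaling of $T$ amounts to $\Oh(T(n,t)\log t)$. The combinatorial content of the reduction is entirely routine; the only point requiring care is precisely this last bookkeeping step, since a naive estimate could produce a $(\log t)^{\alpha}$ overhead for $T(n,t) = (n+t)^{\alpha}$ with $\alpha>1$, and one has to appeal to the paper's conventions (constant-factor scalings absorbed, polylogarithmic overheads tolerated) to land exactly at the claimed $\log t$ factor.
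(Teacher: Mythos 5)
Your proof is correct and takes essentially the same route as the paper: both use the classical binary-doubling (logarithmic weight/value replication) trick to turn each unbounded item into $\Oh(\log t)$ \knapsack items while preserving the same capacity and optimal value, and both verify the bijection between multiplicity vectors and their bitwise expansions. The only difference is cosmetic — you cap the replication at $K_i = \floor{\log_2(t/w_i)}$ per item rather than a uniform $\log t$, and you flag the running-time bookkeeping more explicitly (the new instance has $\Theta(n\log t)$ items, so for $T(n,t) = (n+t)^{\alpha}$ with $\alpha > 1$ the blow-up is really $\mathrm{polylog}(t)$ rather than a single $\log t$); the paper's proof simply asserts instance equivalence and leaves this step implicit. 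Your observation is accurate, but since the downstream use in Theorem~\ref{thm:main} only needs ``subquadratic implies subquadratic,'' the polylogarithmic slack is harmless in either write-up.
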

\begin{proof}
Consider an instance of \uknapsack with capacity $t$ and the
set of items given as weight-value pairs $\left((w_i,v_i)\right)_{i \in \mathcal{I}}$.
Construct an equivalent \knapsack instance with the same $t$ and
the set of items $\left((2^jw_i,2^jv_i)\right)_{i \in \mathcal{I}, 0 \le j \le \log{t}}$.
Let $X = (x_i)_{i \in \mathcal{I}}$ be the list of multiplicities of items chosen in a
solution to the \uknapsack problem.
Of course, $x_i \le t$.
Define $(x_i^j)_{0 \le j \le \log{t}},\, x_i^j \in \{0,1\}$ to be the binary representation of $x_i$.
Then, the vector $(x_i^j)_{i \in \mathcal{I}, 0 \le j \le \log{t}}$ induces a solution
to \knapsack with the same total weight and value.
The described mapping can be inverted. This implies the equivalence between the instances
and proves the claim.
\end{proof}

We now consider the \superadditivity problem. We start by showing that we can
consider only the case of nonnegative monotonic sequences. This is a useful,
technical assumption that simplifies the proofs. 

\begin{lemma}
    \label{nonnegative-sequece}
    Every sequence $(a[i])_{i=0,\ldots,n-1}$ can be transformed in linear time
    to a nonnegative monotonic sequence $(a'[i])_{i=0,\ldots,n-1}$ such that $a[i]$ is
    superadditive iff $a'[i]$ is superadditive.
\end{lemma}

\begin{proof}

First, note that if $a[0] > 0$, then the sequence is not superadditive for
$n>0$ because $a[0] + a[i] > a[i]$. In
the case where $a[0] \le 0$, the $0$-th element does not influence the result of
the algorithm. Thus, we can set $a'[0] = 0$ to ensure the nonnegativity of $a'$. Next, to
guarantee monotonicity, we choose $C > 2 \max_i\{|a[i]|\}$.  Let

\begin{displaymath} 
    a'[i] = \begin{cases} 
        0, &\text{if } i = 0\\
        C i + a[i],& \text{otherwise}.
    \end{cases}
\end{displaymath}

Note that sequence $a'[i]$ is strictly increasing and nonnegative. Moreover, for $i,j>0$,

\begin{align*}
    a'[i] + a'[j] & \le a'[i+j] & \iff \\
    C\cdot i + a[i] + C\cdot j + a[j] & \le  C(i+j) + a[i+j] & \iff \\
    a[i] + a[j] & \le  a[i+j]
    .
\end{align*}

When $i$ or $j$ equals $0$, then we have equality because $a'[0] = 0$.
\end{proof}

\begin{theorem}[\superadditivity $\rightarrow$ \uknapsack]
\label{superadditivity-uknapsack}
If \uknapsack can be solved in time $T(n,t)$,
then \superadditivity admits an algorithm with running time $\Oh\left(T(n,n)\log{n}\right)$.
\end{theorem}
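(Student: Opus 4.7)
The plan is to reduce an instance of $\superadditivity$ on $(a[i])_{i=0}^{n-1}$ with $|a[i]|\le W$ to a \emph{single} call to $\uknapsack$. The key observation is that $a$ is superadditive if and only if
$\phi^{\star}:=\max_{k}\max_{i+j=k}\bigl(a[i]+a[j]-a[k]\bigr)\le 0$,
so it suffices to threshold $\phi^{\star}$ with one knapsack query. The point of the reduction is therefore to design an instance whose optimum value exposes $\phi^{\star}$ up to a known additive constant.

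I would use two families of items together with a tailored capacity. With constants $N,L,R,Q,C$ to be fixed, introduce ``source'' items of type~A with weight $N+i$ and value $L(N+i)+a[i]+C$ for $i=0,\dots,n-1$, and ``target'' items of type~B with weight $R-k$ and value $Q-L(2N+k)-a[k]$ for $k=0,\dots,n-1$, and capacity $T=R+2N$. Concrete choices such as $N=n$, $R=4n$, $L=4W+1$, $C=W+1$, $Q=10LN$ keep every value strictly positive and yield four structural properties. First, $2R-2(n-1)>T$ makes two type-B items infeasible. Second, a very large $Q$ makes any solution without a type-B item suboptimal, so exactly one type-B item is chosen. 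Third, with a type-B item present, the capacity permits at most two type-A items. Fourth, a very large $L$ then makes two type-A items strictly better than zero or one, and forces the tight weight equation $(N+i)+(N+j)+(R-k)=T$, i.e.\ $i+j=k$.

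Under this forced configuration the $L$-terms $L(2N+i+j)$ (from the two type-A items) and $-L(2N+k)$ (from the type-B item) cancel exactly, and the value simplifies to $a[i]+a[j]-a[k]+2C+Q$. Maximising over feasible triples, the knapsack optimum equals $Q+2C+\phi^{\star}$, so $a$ is superadditive iff the optimum is at most $Q+2C$, which is a single comparison after the one $\uknapsack$ call.

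The instance has $2n$ items, capacity $O(n)$ and integer values in $O(nW)$. Under the preliminaries' conventions $T(2n,O(n))=O(T(n,n))$ and $\mathrm{polylog}(W)$ is absorbed, while arithmetic on values of magnitude $O(nW)$ contributes an extra $O(\log n)$, giving the claimed $\Oh(T(n,n)\log n)$ bound. The hard part is the joint balancing of parameters: $Q$ must dominate every ``$L$-scale'' quantity so that a type-B item is always picked, yet $L$ must still dominate the ``$a$-scale'' so that the type-A pair satisfies $i+j=k$; and every remaining configuration (zero, one, three or more type-A items; two type-B items) must be separately ruled out either by weight or by value, each case reducing to a simple inequality that the concrete constants satisfy.
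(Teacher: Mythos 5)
Your reduction takes a genuinely different route from the paper's. The paper first normalizes $a$ to a non‑negative monotonic sequence (a separate technical appendix), sets $D=\sum_i a[i]+1$, and uses the $2n$ items $(i,a[i])$ and $(2n-1-i,D-a[i])$ with $t=2n-1$. The heart of its correctness argument is a ``merge'' step that exploits the \emph{unbounded} nature of the knapsack directly: when $a$ is superadditive, any two light items $(i,a[i]),(j,a[j])$ may be replaced by the single item $(i+j,a[i+j])$, so an optimal solution collapses to exactly one light and one heavy item, with value $D$. You instead engineer two scaled families (type~A sources and type~B targets) and force the $\{2\text{-A},1\text{-B},\,i+j=k\}$ structure purely by magnitude arguments, with no preprocessing of $a$. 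Both are valid designs; the paper's is shorter once the normalization is granted, yours avoids the normalization at the cost of more parameter bookkeeping.

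There is one concrete imprecision, though not a fatal one. With your illustrative constant $L=4W+1$, it is \emph{not} true that the tight weight equation $i+j=k$ is forced, nor that the knapsack optimum equals $Q+2C+\phi^\star$. A feasible triple with $i+j=k-1$ contributes $Q+2C-L+\bigl(a[i]+a[j]-a[k]\bigr)$, whose maximum can be as large as $Q+2C-L+3W=Q+2C-W-1$, which exceeds $Q+2C+\phi^\star$ whenever $\phi^\star<-W-1$. So the optimum is not the clean affine function of $\phi^\star$ you assert. What saves the argument is that the decision only needs the one‑sided threshold: every configuration other than $\{2\text{-A},1\text{-B},\,i+j=k\}$ has value strictly below $Q+2C$, while the $i+j=k$ configurations attain exactly $Q+2C+\phi^\star$; hence ``optimum $\le Q+2C$'' iff $\phi^\star\le 0$ iff $a$ is superadditive, which is what you test. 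To make your stronger statement (optimum $=Q+2C+\phi^\star$) literally correct, take $L>6W$ (e.g.\ $L=7W+1$), so that the $-L$ penalty from any slack $i+j<k$ strictly dominates the $6W$ swing in the $a$‑terms. You should also dispose of the degenerate $n=1$ case directly, where some of the magnitude comparisons above are vacuous. With those two adjustments the argument is complete and yields the claimed $\Oh(T(n,n)\log n)$ bound.
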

\begin{proof}
Let $(a[i])_{i=0}^{n-1}$ be a nonnegative monotonic sequence (see Lemma~\ref{nonnegative-sequece}).
Set $D = \sum_{i=0}^{n-1} a[i] + 1$, and construct an \uknapsack instance with
the set of items $\left((i,a[i])\right)_{i=0}^{n-1}$ and
$((2n-1-i,D-a[i]))_{i=0}^{n-1}$ with target $t=2n-1$.
It is always possible to obtain $D$ by taking two items $(i,a[i])$, $(2n-1-i,D-a[i])$ for any $i$.
We claim that the answer to the constructed instance equals $D$ if and only if $a$ is superadditive.

If $a$ is not superadditive, then there are $i,j$ such that $a[i] + a[j] > a[i+j]$.
Choosing $(i,a[i])$, $(j,a[j])$, $(2n-1-i-j, D-a[i+j])$ gives a solution with a value
exceeding~$D$.

Now, assume that $a$ is superadditive.
Observe that any feasible knapsack solution may contain at most one item with a weight exceeding $n-1$.
On the other hand, the optimal solution has to include one such item
because the total value of the lighter ones is less than $D$.
Therefore, the optimal solution contains an item $(2n-1-k,D-a[k])$ for some $k < n$.
The total weight of the rest of the solution is at most $k$.
As $a$ is superadditive, we can replace any pair $(i,a[i]), (j,a[j])$
with the item $(i+j, a[i+j])$ without decreasing the value of the solution.
By repeating this argument, we end up with a single item lighter than $n$.
The sequence $a$ is monotonic; thus, it is always profitable to replace these two
items with the heavier one, as long as the load does not exceed $t$.
We conclude that every optimal solution must be of the form $((k,a[k]), (2n-1-k,D-a[k]))$,
which completes the proof.
\end{proof}

\begin{theorem}[\maxconvupper $\rightarrow$ \superadditivity]
\label{maxconvupper-superadditivity}
\label{thm:superadd-upperbound}
If \superadditivity can be solved in time $T(n)$,
then \textsc{MaxConv UpperBound} admits an algorithm with running time $\Oh\left(T(n)\log{n}\right)$.
\end{theorem}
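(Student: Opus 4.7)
The plan is to reduce a \maxconvupper instance $(a,b,c)$ to one \superadditivity query on a sequence $s$ of length $\Oh(n)$, via a block encoding. The first step is to absorb the asymmetry of $(a,b,c)$ into additive shifts: replacing $a \to a - \alpha$, $b \to b - \beta$, $c \to c - \alpha - \beta$ preserves the inequality $c[k] \ge \max_{i+j=k}(a[i]+b[j])$, so I may assume WLOG that the shifted sequences $a''$, $b''$, $c''$ are strictly negative, living in an interval $[-\,2W-\delta,\,-\delta]$ for a parameter $\delta$ that is a small constant multiple of the input magnitude bound $W$.

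The core construction is to place $a''$, $b''$, $c''$ into disjoint blocks of $s$ starting at positions $A$, $B$, $C$ with $A + B = C$, for instance $A = n$, $B = 4n$, $C = 5n$ inside a length-$6n$ sequence. Each remaining position carries a padding value $-L$, with $L$ another constant multiple of $W$. The identity $A+B=C$ ensures that every decomposition $(A+i, B+j)$ with $i+j=k$ contributes exactly $a''[i] + b''[j]$ to $(s \star s)[C + k]$; the sufficient separation of the blocks guarantees that the self-sums $A+A$ and $B+B$ land strictly before $C$-block and cannot pollute it. I would then claim: for appropriate $\delta$ and $L$ (for example $\delta = 3W$, $L = 6W$), $s$ is superadditive if and only if the original \maxconvupper instance is a yes-instance.

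Proving this equivalence is the main technical step and requires a position-by-position analysis of $(s \star s)$. At every $C$-block position $C + k$, the ``real'' contribution $(a'' \star b'')[k]$ coexists with ``junk'' terms from decompositions involving at least one padded coordinate; choosing $L$ so that $-L - \delta$ falls below the minimum of $c''$ suppresses every junk term and leaves the superadditivity constraint equivalent to $c''[k] \ge (a'' \star b'')[k]$, i.e.\ to the original $k$-th \maxconvupper check. At positions inside $A$-block, $B$-block, and padding, a symmetric calculation shows $(s \star s)$ stays small enough that the corresponding constraints hold automatically. The main obstacle is this exhaustive case analysis: every pair of block types (including pairs involving padding) must be inspected so that the parameters $(A, B, \delta, L)$ simultaneously (i) keep the real $(a \star b)$-information visible at $C$-block and (ii) dominate all unwanted contributions elsewhere. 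Once finished, $s$ has length $6n$ with integer entries of magnitude $\Oh(W)$, yielding a single call to the assumed algorithm of cost $T(\Oh(n)) = \Oh(T(n))$; the additional $\log n$ factor in the theorem statement accommodates the arithmetic overhead on the shifted and padded integers as per the Preliminaries convention.
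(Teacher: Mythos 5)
Your construction is correct in outline but takes a genuinely different route from the paper. The paper's proof first normalizes $(a,b,c)$ to be \emph{non-negative and monotonic} by adding $C + Di$ (which drives the integers up to $\Oh(nW)$), then packs them into a length-$4n$ sequence $e$ with four contiguous blocks of value $0$, $K+a[i]$, $4K+b[i]$, $5K+c[i]$; the vertical offsets $K,4K,5K$ and the monotonicity are what kill the cross-block and zero-block sums, and blocks $1+2=3$ sit adjacent to one another. Your version instead normalizes to \emph{negative} sequences (keeping integers at $\Oh(W)$), spreads the three blocks far apart at starts $n,4n,5n$ inside a length-$6n$ array, and uses a strongly negative padding $-L$ so that every self- or cross-sum except the intended $A$-block$+$$B$-block lands on a padding or a block it cannot exceed. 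The wide horizontal separation ($1+1=2 < 4$) replaces the paper's vertical $K$-offsets, and the negative padding replaces the monotonicity argument. This is a sound alternative; its small advantage is that it does not incur the $\Oh(nW)$ integer blowup, so the $\log n$ factor in the statement is actually unnecessary in your variant (your stated justification for it is therefore off).

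Two inaccuracies you should repair in a full write-up. First, your claim that $c''$ also lies in $[-2W-\delta,\,-\delta]$ is wrong: once $\alpha=\beta=W+\delta$ forces $a'',b''$ into that window, the tied shift $c''=c-\alpha-\beta$ lands in $[-3W-2\delta,\,-W-2\delta]$, which is strictly more negative. This matters because the binding constraint is (block element)$\,-\,L \le \min c''$, i.e.\ $-\delta - L \le -3W-2\delta$, giving $L \ge 3W+\delta$ rather than the weaker $L\ge 2W$ your stated interval would suggest. Fortuitously your example $\delta=3W$, $L=6W$ sits exactly at this threshold and simultaneously at the opposite bound $L\le 2\delta$ coming from $A$-block$+$$A$-block landing on padding, so the parameters do work — but with zero slack on both sides. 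Choosing, say, $\delta=4W$, $L=7W$ gives room to breathe. Second, the ``exhaustive case analysis'' is genuinely the content of the proof and cannot be waved away: you must enumerate all pairs (padding, $A$, $B$, $C$) and all landing blocks, verifying that the only surviving constraint is $A+B\to C$; I checked this and it does go through with your block placement, but a submitted proof would need those lines written out.
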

\begin{proof}
We start by reducing the instance of \textsc{MaxConv UpperBound} to the
case of nonnegative monotonic sequences (analogous to
Lemma~\ref{nonnegative-sequece}). Observe that condition $a[i] + b[j] \le c[i+j]$ can be rewritten as
$(C + a[i] + Di) + (C + b[j] + Dj) \le 2C + c[i+j] + D(i+j)$ for any constants $C, D$.
Hence, replacing sequences $(a[i])_{i=0}^{n-1},\, (b[i])_{i=0}^{n-1},\, (c[i])_{i=0}^{n-1}$
with $a'[i] = C + a[i] + Di,\, b'[i] = C + b[i] + Di,\, c'[i] = 2C + c[i] + Di$ leads to an equivalent instance.
We can thus pick $C, D$ of magnitude $\Oh(W)$ to ensure that all elements are nonnegative
and that the resulting sequences are monotonic.
The values in the new sequences may increase to a maximum of $\Oh(nW)$.


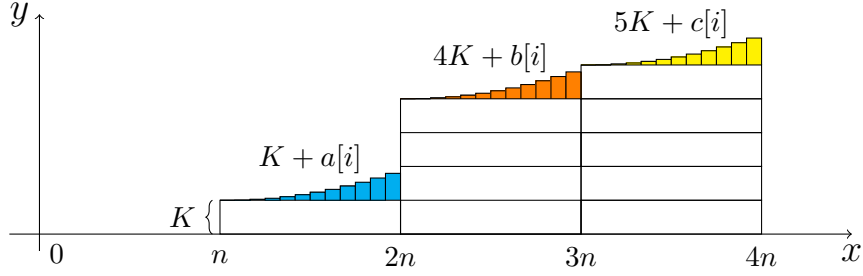
\begin{figure}[ht!]
    \centering
    \begin{tikzpicture}[xscale=0.8, yscale=0.45]

	   \draw [->] (-.5,0) -- ++(14,0) node[below] {\Large $x$};
	   \draw [->] (0,-.5) -- ++(0,7) node[left] {\Large $y$};
	   \foreach \i/\p/\a in {1/.5/2,3/.25/2.75} {%
		   \foreach \color/\transx/\transy in {cyan/3/1,orange/6/4, yellow/9/5} {
			   \begin{scope}
				   \foreach \x in {0,\p,...,\a} {%
					   \draw[fill=\color] (\x+\transx,\transy) -- (\transx+\x,{pow(\x,2.0)/9.5+\transy}) -- (\transx+\x+\p,{pow(\x,2.0)/9.5+\transy}) -- (\transx+\x+\p,\transy);
				   }
			   \end{scope}
			}
	   } 
       \draw[] (3,0) rectangle (6,1);
       \draw[] (6,0) rectangle (9,1);
       \draw[] (6,2) rectangle (9,3);
       \draw[] (6,0) rectangle (9,4);
       \draw[] (9,0) rectangle (12,1);
       \draw[] (9,1) rectangle (12,2);
       \draw[] (9,2) rectangle (12,3);
       \draw[] (9,3) rectangle (12,4);
       \draw[] (9,4) rectangle (12,5);

       \draw[below] (0,0) node[below right] () {$0$};
       \draw (3,0) node[yshift=-3mm] () {$n$};
       \draw (6,0) node[yshift=-3mm] () {$2n$};
       \draw (9,0) node[yshift=-3mm] () {$3n$};
       \draw (12,0) node[yshift=-3mm] () {$4n$};

       \draw[above] (4.5,1.5) node () {$K+a[i]$};
       \draw[above] (7.5,4.5) node () {$4K+b[i]$};
       \draw[above] (10.5,5.5) node () {$5K+c[i]$};

       \draw [decorate, decoration={brace, raise=1mm}] (3,0) --
       (3,1) node[midway,xshift=-5mm] {$K$};

   \end{tikzpicture}
    \label{rys:superadd-upperbound}
    \caption{Graphical interpretation of the sequence $e$ in Theorem \ref{thm:superadd-upperbound}.
    The height of rectangles equals~$K$.}
\end{figure}

Herein, we can assume the given sequences to be nonnegative and monotonic.
Define $K$ to be the maximum value occurring in given sequences $a,b,c$.
Construct a sequence $e$ of length $4n$ as follows.
For $i \in [0,n-1]$, set $e[i] = 0,\, e[n+i] = K + a[i],\, e[2n+i] = 4K + b[i],\, e[3n+i] = 5K + c[i]$.
If $a[i] + b[j] > c[i+j]$ exists for some $i,j$, then $e[n+i] + e[2n+j] > e[3n+i+j]$; therefore, $e$ is not superadditive.
We now show that in any other case, $e$ must be superadditive.

Assume w.l.o.g. that there are $i$ and $j$ such that $i \le j$.
The case $i < n$ can be ruled out because it implies $e[i] = 0$ and $e[i] + e[j] \le e[i+j]$ for any $j$, as $e$ is monotonic.
If $i \ge 2n$, then $i+j \ge 4n$; thus, we can restrict to $i \in [n, 2n-1]$.
For similar reasons, we can assume that $j < 3n$.
Now, if $j \in [n, 2n-1]$, then $e[i] + e[j] \le 4K \le e[i + j]$.
    Finally, for $j \in [2n, 3n-1]$, superadditivity clearly corresponds to \textsc{MaxConv UpperBound}'s defining condition.
\end{proof}

The proof of the reduction from \maxconv to \maxconvupper was recently independently
given in~\cite{sparsity}. The technique was introduced
by~\naiveciteone{Vassilevska and Williams}{vassilevska2010subcubic} to show
a subcubic reduction from $(\min,+)$-matrix multiplication for detecting a
negative weight triangle in a graph.

\begin{theorem}[\maxconv $\rightarrow$ \maxconvupper]
\label{maxconv-maxconvupper}
    A $T(n)$ algorithm for  \maxconvupper implies an $\Oh\left(T(\sqrt{n})n\log{n}\right)$ algorithm
    for \maxconv.
\end{theorem}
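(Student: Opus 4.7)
My plan is a square-root block decomposition. Partition $a$ and $b$ each into $\sqrt{n}$ consecutive blocks of length $\sqrt{n}$, writing $a = A_0 \| A_1 \| \cdots \| A_{\sqrt{n}-1}$ and $b = B_0 \| B_1 \| \cdots \| B_{\sqrt{n}-1}$. Every output entry of the full $\maxconv$ decomposes as
\[
c[k] \;=\; \max_{p + q \in \{s-1,\, s\}}\; (A_p \star B_q)\bigl[k - (p+q)\sqrt{n}\bigr], \qquad s = \lfloor k/\sqrt{n}\rfloor,
\]
where $\star$ denotes $\maxconv$. Hence it suffices to recover the entries of the $n$ block-pair $\maxconv$s $A_p \star B_q$ of length $\sqrt{n}$; the final $c$ is then assembled in $O(n)$ additional time by taking maxima over the constant number of contributing block pairs per output position.

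Each small $\maxconv$ will be extracted from the \maxconvupper oracle by a search-to-decision argument. The key gadget is that a single query "is $(A_p \star B_q)[r] < v$?" is resolved by one \maxconvupper call on the triple $(A_p, B_q, c')$ with $c'[r] = v-1$ and $c'[r'] = +\infty$ for $r' \neq r$; the oracle returns "yes" precisely when the $r$-th entry is below $v$. Binary searching over the value range (the $\log W$ factor being absorbed by $T$ per the Preliminaries) then pins down each entry exactly. The explicit $\log n$ factor in the final bound accounts for the $O(\log n)$ overhead of arithmetic on integers that may grow to $O(nW)$ in the course of the construction.

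The main obstacle I anticipate is the accounting rather than the gadget. A blind per-entry binary search across all $n$ block pairs and their $\Theta(\sqrt{n})$ entries per pair produces a cost of $O(T(\sqrt{n}) \cdot n^{3/2} \log n)$, which is worse than the stated bound by a factor of $\sqrt{n}$. Reaching $O(T(\sqrt{n}) \cdot n \log n)$ will require sharing work across the $\Theta(\sqrt{n})$ entries of each small $\maxconv$ (or across block pairs along a common diagonal $p+q = s$), for example by arranging the binary searches as a single adaptive refinement of a candidate upper bound $c'$ over all output positions of one pair rather than as $\Theta(\sqrt{n})$ independent searches, and by exploiting that many block-pair entries are dominated in the final maximum and need not be computed exactly. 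Getting this amortization right is the step where I expect the real difficulty to lie.
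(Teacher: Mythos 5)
Your block decomposition matches the paper's, and your single-entry gadget (a \maxconvupper call with $c'[r]=v-1$ and $c'[r']=+\infty$ elsewhere) is a valid threshold query. The accounting obstacle you flag at the end is real, and it is the crux; but the fix is not the one you gesture at. The paper does not attempt to compute each block-pair $\maxconv$ exactly, and the amortization does not live within a block pair or along a diagonal $p+q=s$. Instead it keeps a single global candidate array $c$ and uses the oracle as a \emph{violation detector}: for each of the $m^2=\Oh(n)$ block pairs $(I_x,I_y)$ it asks whether some constraint $a[i]+b[j]\le c[i+j]$ with $i\in I_x$, $j\in I_y$ fails, locates the smallest violated global index $k=i+j$ by an $\Oh(\log n)$-step prefix binary search on the restricted sequences, then sets $c[k]:=K$ for a huge sentinel $K$ and re-queries the \emph{same} block pair. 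Every re-query permanently resolves a fresh global index $k\in[0,n-1]$, so the total number of extra oracle calls over the whole detection pass is at most $n$, not $n^{3/2}$ --- this is exactly the Vassilevska--Williams negative-triangle charging trick the paper cites. One such pass costs $\Oh(T(\sqrt{n})\,n\log n)$ and reports, for every $k$ simultaneously, whether the current $c[k]$ was an underestimate; wrapping this pass in a parallel binary search over the value range ($\log W$ rounds, absorbed into $T$ by the paper's conventions) recovers the exact $\maxconv$. So your framework is on target, but the missing ingredient is the charge-each-violation-to-a-distinct-output-index amortization, together with the shift from ``compute block-pair convolutions'' to ``detect and fix violations in a global candidate,'' rather than any rearrangement of per-entry binary searches.
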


\begin{proof}
Let us assume that we have access to an oracle solving the \textsc{MaxConv UpperBound},
    i.e., checking whether $a \oplus^{\max} b \le c$.
First, we argue that by invoking this oracle $\log n$ times, we
can find an index $k$ for which there exists a pair $i,j$ violating the superadditivity constraint, i.e., satisfying $a[i]+b[j] > c[k]$, where $k = i+j$ if such an index $k$ exists.
Let $pre_k(s)$ be the $k$-element prefix of a sequence $s$.
The inequality $pre_k(a) \oplus^{\max} pre_k(b) \le pre_k(c)$ holds
only for those $k$ that are less than the smallest value of $i+j$ with a broken constraint.
We can use binary search to find the smallest $k$ for which the inequality does not hold.
This introduces an overhead of factor $\log {n}$.

Next, we want to show that by using an oracle that finds one violated index,
we can in fact find all violated indices.
Let us divide $[0,n-1]$ into $m = \sqrt{n} + \Oh(1)$ intervals $I_0, I_2, \dots I_m$ of equal length,
except potentially for the last one.
For each pair $I_x, I_y$, we can check whether $a[i] + b[j] \le c[i + j]$
for all $i \in I_x,\, j \in I_y$ and find a violated constraint (if any exist)
in time $T(\sqrt{n})\log{n}$
by translating the indices to $[0, 2n/m] = [0, 2\sqrt{n} + \Oh(1)]$.
After finding a pair $i,j$ that violates the superadditivity,
we substitute $c[i+j] := K$, where $K$ is a constant exceeding all feasible sums,
and continue analyzing the same pair.
Once anomalies are no longer detected, we move on to the next pair.
It is important to note that when an index $k$ violating superadditivity
is set to $c[k] := K$, this value $K$ is also preserved 
for further calls to the oracle -- in this way, we ensure that each violated index $k$
is reported only once.

For the sake of readability, we present a pseudocode (see Algorithm~\ref{alg:maxconv-detect}).
The subroutine \textsc{MaxConvDetectSingle} returns the value of $i+j$
for a broken constraint or $-1$ if none exist.
The notation $s^x$ stands for the subsequence of $s$ in the interval $I_x$.
    We assume that $c[i]=K$ for $i \ge n$.

\begin{algorithm}
	\caption{$\textsc{MaxConvDetectViolations}(a,b,c)$}
	\label{alg:maxconv-detect}
\begin{algorithmic}[1]
\For {$x=0,\ldots,m-1$}
  \For {$y=0,\ldots,m-1$}
    \State $k := 0$
    \While {$k \ge 0$}
      \State $k := \textsc{MaxConvDetectSingle}(a^x, b^y, c^{x+y} \cup c^{x+y+1})$
	  \If {$k \ge 0$}
	    \State $c[k] := K$
		\State violated$[k] :=$ true
	  \EndIf
	\EndWhile 
  \EndFor
\EndFor
\State \Return violated$[0,\dots,n-1]$
\end{algorithmic}
\end{algorithm}

The number of considered pairs of intervals equals $m^2 = \Oh(n)$.
Furthermore, for each pair, every call to \textsc{MaxConvDetectSingle} except the last one is followed
by setting a value of some element of $c$ to $K$.
This can happen only once for each element; hence, the total number of repetitions is at most $n$.
Therefore, the running time of the procedure \textsc{MaxConvDetectViolations} is $\Oh\left(T(\sqrt{n})n\log{n}\right)$.

By running this algorithm, we learn for each $k \in [0, n-1]$ whether
$c[k] > \max_{i \in [0, k]} a[i] + b[k-i]$.
Then, we can again use binary search
for each coordinate simultaneously.
After running the presented procedure $\log{W}$ times, the value of $c[k]$
will converge to $\max_{i \in [0, k]} a[i] + b[k-i]$ for every $k$.
\end{proof}

\begin{corollary}
If there exists a truly subquadratic algorithm for \maxconv, then it may be assumed to
have $\Ot(n)$ space dependency.
\end{corollary}
\begin{proof}

    Consider the Algorithm~\ref{alg:maxconv-detect}. It uses $\Oh(n)$ space to store
    the \textsc{violated} table containing the answer. The only other place where
    additional space might be required is the call to the \textsc{MaxConvDetectSingle} oracle. 
    Note that each call runs in time $T(\sqrt{n})$, as the parameters are tables 
    with $\Oh(\sqrt{n})$ elements. If \maxconv has a truly subquadratic algorithm, then $T(\sqrt{n}) =
    \Oh(n^{1-\eps/2})$, i.e.,\ it is truly sublinear. Because the oracle cannot
    use polynomially more space than its running time, the calls to the oracle
    require at most linear space (up to polylogarithmic factors).

This means that the main space cost of Algorithm~\ref{alg:maxconv-detect} is to
store an answer in the table \textsc{violated} and yields $\Ot(n)$ space
dependency.

\end{proof}

\section{The reduction from \knapsack to \maxconv}
\label{knapsack-bringmann}

We start with a simple observation: for \uknapsack (a single item can be chosen multiple times), an
$\Ot(t^2 + n)$ time algorithm can be obtained by using the standard dynamic
programming $\Oh(nt)$ algorithm.

\begin{theorem}
    There exists an $\Ot(t^2 + n)$ time algorithm for the \uknapsack
    problem.
\end{theorem}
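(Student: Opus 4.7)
The plan is to combine a simple preprocessing step that collapses the item list to at most $t$ items with the textbook $O(nt)$ dynamic programming for \uknapsack.

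First I would dispose of useless items. Any item with weight $w_i > t$ cannot fit in a knapsack of capacity $t$, so it can be discarded in $O(n)$ time. For the remaining items, which all have weights in $\{1, 2, \ldots, t\}$, I would bucket them by weight: for each weight value $w \in \{1, \ldots, t\}$, retain only the item of maximum value among those with weight $w$. This is correct because in \uknapsack we are allowed to take multiple copies, so whenever a solution uses an item of weight $w$ we may always swap it for the most valuable item of weight $w$ without violating the capacity constraint or decreasing the total value. The bucketing takes $O(n)$ time (e.g., by scanning once through the items), and afterwards we have at most $t$ items, one per distinct weight.

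Next, I would run the standard Bellman-style dynamic program on this reduced instance. Let $D[j]$ denote the maximal value achievable with capacity exactly $j$ (or at most $j$, with the convention $D[0]=0$). For each $j$ from $1$ to $t$, compute
\[
D[j] \;=\; \max\bigl(D[j-1],\; \max_{i : w_i \le j}\, (D[j - w_i] + v_i)\bigr).
\]
Each step of the outer loop requires scanning the (at most $t$) surviving items, so the total cost of this phase is $O(t^2)$. Combined with the preprocessing, the overall running time is $O(n + t^2)$, which is $\Ot(t^2 + n)$ even accounting for polylogarithmic factors hidden in $\Ot$ from arithmetic on integers of magnitude $\Oh(nW)$.

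There is essentially no technical obstacle here; the one subtlety worth highlighting is why bucketing by weight is safe in the unbounded setting. The argument crucially uses the fact that items may be repeated: any optimal multiset can be modified by exchanging each copy of an inferior item with the best item of the same weight, yielding a feasible solution of no smaller value. This observation would fail for \knapsack (0/1), which is consistent with Theorem~\ref{knapsack-maxconv} strongly linking the latter to \maxconv.
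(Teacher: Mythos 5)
Your proof is correct and follows essentially the same approach as the paper: discard items heavier than $t$, retain only the most valuable item per weight (safe precisely because the unbounded setting allows swapping copies), and run the standard $\Oh(nt)$ dynamic program on the at most $t$ surviving items. Your write-up is a bit more explicit about the recurrence and about why the bucketing fails for 0/1 \knapsack, but the underlying argument is the same.
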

\begin{proof}
    Our algorithm starts by discarding all items with weight larger than $t$.
    Since we are considering the unbounded case, for a given weight, we can ignore
    all items except the one with the highest value, as we can always take more
    copies of the most valuable item among the ones of equal weight. 
    We are left with at most $t$ items. Thus, using the standard
    $\Oh(nt)$ dynamic programming leads to a running time of $\Ot(t^2 + n)$.
\end{proof}

We show that from the perspective of the parameter $t$, this is the best we can hope for, unless $n$ appears in the complexity with an exponent higher than $2$
or there is a breakthrough for the \maxconv problem. In this section, we complement these results and show that a truly subquadratic algorithm for
\maxconv implies an $\Ot(t^{2-\epsilon} + n)$ algorithm for \knapsack.
We follow Bringmann's~\cite{bringmann} near-linear pseudopolynomial time
algorithm for \subsetsum and adapt it to the \knapsack problem. To do this, we
need to introduce some concepts related to the \subsetsum problem from previous
works. The key observation is that we can substitute the FFT in~\cite{bringmann} with \maxconv and consequently obtain
an $\Ot(T(t) + n)$ algorithm for \knapsack (where $T(n)$ is the time needed to solve \maxconv).

\subsection{Set of all subset sums}

Let us recall that in the \subsetsum problem, we are given a set $S$ of $n$ integers together with a target integer $t$. The goal is to determine whether there exists a subset of $S$ that sums up to $t$.

Horowitz~and~Sahni~\cite{sumset} introduced the notion of the set of \emph{all subset sums} that was later used
by Eppstein~\cite{sumset-app1} to solve the \emph{Dynamic Subset Sum} problem. More recently,
Koiliaris~and~Xu~\cite{faster-subsetsum} used it to develop an $\Ot(\sigma)$ algorithm for
\subsetsum ($\sigma$ denotes the sum of all elements). Later,
Bringmann~\cite{bringmann} improved this algorithm to $\Ot(n+t)$ ($t$ denotes the target
number in the \subsetsum problem).

The set of \emph{all subset sums} is defined as follows:

\begin{displaymath}
    \Sigma(S) = \Bigl\{ \sum_{a \in A} a \; |\; A \subseteq S \Bigr\}
    .
\end{displaymath}

For two sets $A, B \subseteq [0,u] $, the set $A \oplus B = \{ a+b \; |\; a \in A, b
\in B\}$ is their join, and $u$ is the upper bound of the elements $A$ and $B$. This join can be computed in time $\Oh(u\log u)$ by using the FFT. Namely, we write $A$ and $B$ as polynomials $f_A(x) =
\sum_{i\in A} x^i$ and $f_B(x) = \sum_{i\in B} x^i$, respectively. Then, we can compute the polynomial $g =
f_1 \cdot f_2$ in $\Oh(u\log u)$ time. Polynomial $g$ has a nonzero coefficient in front of
the term $x^i$ iff $i \in A \oplus B$. We can also easily extract $A\oplus B$.

Koiliaris~and~Xu~\cite{faster-subsetsum} noticed that if we want to compute $\Sigma(S)$ for a
given $S$, we can partition $S$ into two sets: $S_1$ and $S_2$, recursively
compute $\Sigma(S_1)$ and $\Sigma(S_2)$, and then join them using the FFT.
Koiliaris~and~Xu~\cite{faster-subsetsum} analyzed their algorithm using
Lemma~\ref{log-analysis}, which was later also used by Bringmann~\cite{bringmann}.

\begin{lemma}[\cite{faster-subsetsum}, Observation 2.6]
    \label{log-analysis}
        Let $g$ be a positive, superadditive (i.e., $\forall_{x,y} g(x + y) \ge g(x) + g(y)$)
    function. For a function $f(n,m)$ satisfying
    \begin{displaymath}
        f(n,m) = \max_{m_1 + m_2 = m} \Bigl\{ f\Bigl(\frac{n}{2}, m_1\Bigr) +
    f\Bigl(\frac{n}{2}, m_2\Bigr) + g(m)) \Bigr\}
    \end{displaymath}

    we have that $f(n,m) = \Oh(g(m)\log n)$.
\end{lemma}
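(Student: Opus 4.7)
The plan is to unfold the recurrence into a binary recursion tree of depth $\log n$, observe that the $m$-values of the nodes at any single level partition $m$, and use superadditivity of $g$ to bound the total cost contributed by each level by $g(m)$.

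Concretely, I would prove by induction on $\log n$ that $f(n,m) \le g(m)\log n + g(m)$. For the inductive step, fix any split $m_1 + m_2 = m$ attaining the maximum in the recurrence. The inductive hypothesis applied to the two children gives
\[
f(n,m) \le \bigl(g(m_1)\log(n/2) + g(m_1)\bigr) + \bigl(g(m_2)\log(n/2) + g(m_2)\bigr) + g(m).
\]
Superadditivity of $g$ yields $g(m_1) + g(m_2) \le g(m_1 + m_2) = g(m)$, so the right-hand side is at most $g(m)\log(n/2) + g(m) + g(m) = g(m)\log n + g(m)$, closing the induction and giving $f(n,m) = \Oh(g(m)\log n)$.

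An equivalent and perhaps more intuitive view is a level-by-level charging argument on the recursion tree. Level $d$ contains $2^d$ subproblems whose $m$-parameters $\{m_v\}$ partition $m$, and each such node contributes an additive $g(m_v)$ to the total cost. Iterated application of two-argument superadditivity gives $\sum_v g(m_v) \le g\bigl(\sum_v m_v\bigr) = g(m)$, so each of the $\log n + \Oh(1)$ levels contributes at most $g(m)$, summing to the claimed bound.

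The only subtleties are routine bookkeeping: formally lifting two-argument superadditivity to arbitrary finite partitions (a trivial induction on the number of parts), and pinning down a base case such as $f(1,m) = g(m)$ so the induction can start. Neither presents a genuine obstacle; the real content of the lemma is the observation that superadditivity lets one pay $g(m)$ per level instead of paying $g(m_v)$ at every single node of the recursion tree, which would otherwise blow up by a factor polynomial in $n$.
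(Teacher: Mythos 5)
The paper does not actually prove this lemma; it cites it verbatim as Observation 2.6 of Koiliaris and Xu, so there is no paper-internal proof to compare against. Your argument is correct and is the standard one: the induction with hypothesis $f(n,m) \le g(m)\log n + g(m)$ closes cleanly, since superadditivity collapses $g(m_1)+g(m_2)$ to at most $g(m)$, and the equivalent level-by-level charging view (each of the $\Oh(\log n)$ levels of the recursion tree has $m$-parameters partitioning $m$, hence total cost at most $g(m)$ per level) is exactly the intuition behind the original observation. The bookkeeping points you flag — extending two-part superadditivity to finite partitions and fixing a base case like $f(1,m)=g(m)$ — are indeed routine and do not affect the argument.
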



\subsection{Sum of all sets for \knapsack}
\label{piknapsack}

We now adapt the notion of the sum of all sets to the \knapsack setting. 
Here, we use a data structure that, for a given capacity, stores the value of the best
solution we can pack. This data structure can be implemented as an array of
size $t$ that keeps the largest value in each cell (for comparison, $\Sigma(S)$
was implemented as a binary vector of size $t$). To emphasize that we are
working with \knapsack, we use $\Pi(S)$ to denote the array of the values
for the set of items $S$.


If we have two partial solutions $\Pi(A)$ and $\Pi(B)$, we can compute their join,
denoted as $\Pi(A) \oplus^{\max} \Pi(B)$. A valid solution in $\Pi(A)
\oplus^{\max} \Pi(B)$ of weight $t$ consists of a
solution from $\Pi(A)$ and one from $\Pi(B)$ that sum up to $t$ (one of them can
    be $0$). Hence, $\Pi(A) \oplus^{\max} \Pi(B)[k] = \max_{0 \le i \le k} \{ \Pi(A)[k-i] +
\Pi(B)[i] \}$. This product is the \maxconv of array $\Pi(A)$ 
and $\Pi(B)$. We will use $\Pi(A) \oplus_t^{\max} \Pi(B)$ to denote the
\maxconv of $A$ and $B$ for domain~$\{0,\ldots,t\}$.

To compute $\Pi(S)$, we can split $S$ into two equal-cardinality, disjoint
subsets $S = S_1 \cup S_2$, recursively compute $\Pi(S_1)$ and $\Pi(S_2)$, and
finally join them in $\Oh(T(\sigma))$ time ($\sigma$ is the sum of weights of
all items). By Lemma~\ref{log-analysis}, we obtain an $\Oh(T(\sigma)\log{\sigma}
\log{n})$ time algorithm (recall that the naive algorithm for \maxconv works in
$\Oh(n^2)$ time).

\subsection{Retracing Bringmann's steps}

In this section, we obtain an $\Ot(T(t)+n)$ algorithm for \knapsack, which improves
upon the $\Ot(T(\sigma))$ algorithm from the previous section.
In his algorithm~\cite{bringmann} for \subsetsum, Bringmann uses two key techniques.
First, \emph{layer splitting} is based on a very useful observation that an instance $(Z,t)$ can
be partitioned into $\Oh(\log{n})$ layers $L_i \subseteq (t/2^i, t/2^{i-1}]$
(for $0 < i < \ceil{\log{n}}$) and
$L_{\ceil{\log{n}}} \subseteq [0,t/2^{\ceil{\log{n}}-1}]$. With this partition, we may infer that for $i>0$,
at most $2^i$ elements from the set $L_i$ can be used in any solution 
(otherwise, their cumulative sum would be larger than $t$).
The second technique is an application of \emph{color coding}~\cite{color-coding} 
that results in a fast, randomized algorithm that can compute all solutions with a sum of at
most $t$ using no more than $k$ elements. 
By combining those two techniques, Bringmann~\cite{bringmann} developed an $\Ot(t + n)$
time algorithm for \subsetsum. We now retrace both ideas and use them in
the \knapsack context.

\subsubsection{Color Coding}

We modify Bringmann's~\cite{bringmann} color coding technique by using \maxconv
instead of FFT to obtain an algorithm for \knapsack.
We first discuss the Algorithm~\ref{alg:colorcoding}, which can compute all
solutions in $[0,t]$ that use at most $k$ elements with high probability. We start
by randomly partitioning the set of items into $k^2$ disjoint sets $Z = Z_1 \cup \ldots \cup Z_{k^2}$.
Algorithm~\ref{alg:colorcoding} succeeds in finding a given solution if
its elements are placed in different sets of the partition $Z$.


\begin{lemma}
    \label{color-coding-lemma}

    There exists an algorithm that computes an array $W$ in time $\Oh(T(t) k^2
    \log{(1/\delta)})$ such that, for any $Y \subseteq Z$ with $|Y| \le k$ and
    every weight $i \in [0,t]$, we have $\Pi(Y)[i] \le W[i] \le \Pi(Z)[i]$ with probability $\ge
    1-\delta$ for any constant $\delta \in (0,1)$ (where $T(n)$ is the time needed to compute \maxconv).

\end{lemma}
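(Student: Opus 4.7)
The approach will be to adapt Bringmann's color-coding argument for \subsetsum to the knapsack setting, substituting \maxconv for the role of the Fast Fourier Transform. For a single trial I will draw a uniformly random coloring $\chi : Z \to \{1,\dots,k^2\}$, let $Z_j = \chi^{-1}(j)$, and construct for each color class the ``pick-at-most-one'' array $\widehat{\Pi}(Z_j)$ of length $t+1$, defined by $\widehat{\Pi}(Z_j)[0] = 0$ and $\widehat{\Pi}(Z_j)[w] = \max\{v_i : i \in Z_j,\, w_i = w\}$ for $w \ge 1$ (and $-\infty$ if no such item exists). A single-trial candidate is then the iterated truncated \maxconv
\[
\widehat{\Pi} \;=\; \widehat{\Pi}(Z_1) \;\oplus_t^{\max}\; \widehat{\Pi}(Z_2) \;\oplus_t^{\max}\; \cdots \;\oplus_t^{\max}\; \widehat{\Pi}(Z_{k^2}),
\]
computed by combining the arrays one at a time. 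I will repeat the whole experiment $r = \Theta(\log(1/\delta))$ times independently, obtain $\widehat{\Pi}^{(1)},\dots,\widehat{\Pi}^{(r)}$, and return their entrywise maximum $\Pi(W)[i] := \max_{\ell \le r} \widehat{\Pi}^{(\ell)}[i]$.

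Correctness will follow from a standard rainbow-coloring argument. Fix any $Y\subseteq Z$ with $|Y|\le k$ and any target weight $i\in[0,t]$. The probability that $\chi$ sends the elements of $Y$ to pairwise distinct color classes is at least
\[
\prod_{j=0}^{|Y|-1}\Bigl(1-\frac{j}{k^2}\Bigr) \;\ge\; 1 - \frac{k(k-1)}{2k^2} \;\ge\; \frac{1}{2}.
\]
Whenever this rainbow event occurs, the optimal $Y$-subset witnessing $\Pi(Y)[i]$ picks at most one item from each color class and is therefore a legal configuration for the iterated convolution above, so $\Pi(Y)[i] \le \widehat{\Pi}[i]$. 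Taking the entrywise maximum over $r$ independent trials then drives the per-$(Y,i)$ failure probability to $2^{-r} \le \delta$.

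For the running time, each trial assembles the arrays $\widehat{\Pi}(Z_j)$ in $O(|Z|)$ total time and then performs $k^2 - 1$ \maxconv operations on arrays of length at most $t+1$, costing $T(t)$ each, giving $O(k^2\, T(t))$ per trial and $O(k^2\, T(t)\, \log(1/\delta))$ overall. The delicate point will be that the ``at most one per color'' convention is essential: without it the iterated convolution would trivially collapse to $\Pi(Z)$, and the rainbow analysis would lose all of its content. The guarantee produced is pointwise in $(Y,i)$ rather than uniform over all small subsets of $Z$; this matches the statement of the lemma, and in the subsequent layer-splitting step the parameter $\delta$ can be rescaled to absorb whatever union bound is actually needed there.
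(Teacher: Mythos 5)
Your proof is correct and follows essentially the same route as the paper: partition into $k^2$ color classes, argue a rainbow-coloring probability bound for any fixed $|Y|\le k$, and boost by $\Theta(\log(1/\delta))$ independent repetitions with an entrywise maximum, at a cost of $k^2$ truncated \maxconv operations per trial. Your exposition is actually a bit sharper than the paper's in two respects: you make explicit that the per-color arrays must be the ``pick at most one item'' arrays $\widehat{\Pi}(Z_j)$ (the paper uses the overloaded notation $\Pi(Z_j)$ and only gestures at this, even though the argument collapses without it), and your union-bound estimate gives rainbow probability $\ge 1/2$ rather than the paper's $(1-1/k)^k\ge 1/4$, which changes only the constant inside the $\Theta(\log(1/\delta))$.
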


\begin{algorithm}
    \caption{$\textsc{ColorCoding}(Z,t,k,\delta)$ (cf.~\cite[Algorithm
    1]{bringmann}).}
    \label{alg:colorcoding}
\begin{algorithmic}[1]
    \For {$j=1,\ldots,\ceil{\log_{4/3}(1/\delta)}$}
  \State randomly partition $Z = Z_1 \cup \ldots \cup Z_{k^2}$
  \State $\textbf{P}_j = Z_1 \oplus_t^{\max} \ldots \oplus_t^{\max} Z_{k^2}$
\EndFor
\State \Return $W$, where $W[i] = \max_{j} \textbf{P}_j[i]$
\end{algorithmic}
\end{algorithm}

\begin{proof}

We show split $Z$ into $k^2$ parts: $Z_1 \cup \ldots \cup Z_{k^2}$.
Here, $Z_i$ is an array of size $t$, and $Z_i[j]$ is the value of a single element
(if one exists) with weight $j$ in $Z_i$ (in case of a conflict, we select a random one).

We claim that $Z_1 \oplus_t^{\max} \ldots \oplus_t^{\max} Z_{k^2}$ contains solutions at least as good as those that
use $k$ items (with high probability).
We use the same argument as in~\cite{bringmann}. Assume that
the best solution uses the set $Y \subseteq Z$ of items and $|Y| \le k$. The probability that
all items of $Y$ are in different sets of the partition is the same as the probability
that the second element of $Y$ is in a different set than the first one, the
third element is in a different set than the first and second item, etc. That is:

\begin{displaymath}
    \frac{k^2 - 1}{k^2} \cdot \frac{k^2 - 2}{k^2} \ldots \frac{k^2 - (|Y| -
            1)}{k^2} \ge \bigg(1 - \frac{(|Y| - 1)}{k^2}\bigg)^{|Y|} \ge \bigg(1 -
        \frac{1}{k}\bigg)^k \ge \bigg(\frac{1}{2}\bigg)^2  = \frac{1}{4}.
\end{displaymath}

By repeating this process $\Oh(\log(\frac{1}{\delta}))$ times, we obtain the
correct solution with a probability of at least $1-\delta$. Also, to compute
\maxconv, we need $k^2$ repetitions. Hence, we obtain an $\Oh(T(t) k^2 \log(1/\delta))$ time
algorithm.  \end{proof}


\subsubsection{Layer Splitting}

We can split our items into $\log{n}$ layers. Layer $L_i$ is the set of items with weights in $(t/2^i, t/2^{i-1}]$ for $0 < i < \ceil{\log{n}}$; the last layer $L_{\ceil{\log{n}}}$ has items with weights in $[0,t/2^{\ceil{\log{n}}-1}]$. With this, we can be sure that only $2^i$ items from
the layer $i$ can be chosen for a solution. If we can quickly compute $\Pi(L_i)$ for all $i$, 
then it suffices to compute their \maxconv $\Oh(\log{n})$ times.
We now show how to compute $\Pi(L_i)$ in $\Ot(T(t) + n)$ time using
color coding.

\begin{lemma}
    For all $i$, there exists an algorithm that, for $L_i \subseteq (\frac{t}{2^i},
    \frac{t}{2^{i-1}}]$
    and for all $\delta \in (0,1/4]$, computes $\Pi(L_i)$ in
    $\Oh(T(t\log{t}\log^3(2^{i-1}/\delta)))$ time, where each entry of $\Pi(L_i)$ is correct
    with a probability of at least $1-\delta$.
\end{lemma}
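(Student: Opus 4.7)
I would adapt Bringmann's~\cite{bringmann} \emph{ColorCodingLayer} strategy to the \knapsack setting. The key structural fact is that since every item in $Z$ has weight at least $t/(2l)$, any feasible solution uses at most $2l$ items of $Z$. A direct application of Lemma~\ref{color-coding-lemma} with $k=2l$ would cost $\Oh(T(t)\,l^{2}\log(1/\delta))$, which is far too slow, so the plan is to randomly partition $Z$ into many groups so that each group contains only $\Oh(\log(l/\delta))$ items from any fixed optimum, apply Lemma~\ref{color-coding-lemma} locally with a polylogarithmic parameter on each group, and merge the results via a balanced \maxconv tree.

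\textbf{Algorithm.} First I would set $\kappa := 6\lceil\log(l/\delta)\rceil$ and $b := \lceil 2l/\kappa\rceil$, and assign each item of $Z$ independently to a uniformly random group $Z^{(1)},\dots,Z^{(b)}$. Then for every $j$ I would invoke Lemma~\ref{color-coding-lemma} on $Z^{(j)}$ with parameter $2\kappa$, target capacity $2\kappa t/l$, and per-entry failure probability $\delta/(2b)$. Crucially, since at most $2\kappa$ items of weight $\le t/l$ sum to at most $2\kappa t/l$, the internal \maxconv operations may be run on arrays of that shortened length rather than length~$t$. Finally I would combine $\Pi^{(1)},\dots,\Pi^{(b)}$ with a balanced binary \maxconv tree, extending each array to length $t$ with $-\infty$ and capping intermediate results at $t$.

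\textbf{Correctness.} Fix any entry $i\in[0,t]$ and let $Y^{\star}\subseteq Z$ be a set realising the optimum for that entry, so $|Y^{\star}|\le 2l$. The random variable $|Y^{\star}\cap Z^{(j)}|$ has mean at most $2l/b\le\kappa$, so a Chernoff bound yields $\Pr\!\bigl[|Y^{\star}\cap Z^{(j)}|>2\kappa\bigr]\le e^{-\kappa/3}\le(\delta/l)^{2}$, and a union bound over the $b\le 2l$ groups keeps the total failure probability below $\delta/2$. Conditioned on this event, each Color Coding invocation upper bounds $\Pi(Y^{\star}\cap Z^{(j)})$ at the relevant coordinate except with probability $\delta/(2b)$, so another union bound adds at most $\delta/2$. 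The \maxconv tree then assembles these into the true value $\Pi(Z,t)[i]$, giving per-entry success probability at least $1-\delta$.

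\textbf{Running time and main obstacle.} Each local Color Coding call will cost $\Oh\!\bigl(T(2\kappa t/l)\cdot(2\kappa)^{2}\log(b/\delta)\bigr)$; summed over the $b$ groups and using superadditivity of $T$ together with the identity $b\cdot 2\kappa t/l=\Oh(t)$, this collapses to $\Oh(T(t)\log^{3}(l/\delta))$. The combining tree performs $b/2^{j}$ convolutions of size $\Oh(2^{j}\kappa t/l)$ at level $j$; for polynomial $T$ these costs form a geometric series dominated by the root level of cost $\Oh(T(t))$. Adding the $\log t$ factor for arithmetic overhead on integers of magnitude $\Oh(tW)$ and folding the polylog factors into the argument of $T$ via $T(cn)=\Oh(T(n))$ yields the stated $\Oh\!\bigl(T(t\log t\,\log^{3}(l/\delta))\bigr)$. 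The main technical obstacle will be the careful bookkeeping across the two nested randomised constructions: justifying that the Color Coding's internal \maxconvs may be restricted to the smaller range $\Oh(\kappa t/l)$ rather than $t$, and exploiting superadditivity of $T$ at two different levels to prevent a spurious factor of $b=\Theta(l/\log(l/\delta))$ from surviving into the final time bound.
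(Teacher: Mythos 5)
Your proposal is correct and follows essentially the same route as the paper's proof: split $Z$ into $\Theta(l/\log(l/\delta))$ random groups, run the color-coding lemma on each group with a polylogarithmic parameter and a shortened target window of size $\Oh(\log(l/\delta)\,t/l)$, then merge via a balanced \maxconv tree whose level-$h$ cost is governed by superadditivity of $T$. The only cosmetic difference is that you spell out the Chernoff bound controlling $|Y^\star\cap Z^{(j)}|$ directly, whereas the paper simply invokes Bringmann's Claim~3.3, which encapsulates the same concentration estimate.
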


\begin{algorithm}
    \caption{$\textsc{ColorCodingLayer}(L,t,i,\delta)$ (cf. \cite[Algorithm
    3]{bringmann}).}\label{alg:colorcodinglayer}
\begin{algorithmic}[1]
\State $l = 2^i$
\State \textbf{if} $l < \log(l/\delta)$ \textbf{then return} $\textsc{ColorCoding}(L,t,l,\delta)$
\State $m = l/\log(l/\delta)$ rounded up to the next power of 2
\State randomly partition $L = A_1 \cup \ldots \cup A_{m}$
\State $\gamma = 6 \log(l/\delta)$
\For {$j=1,\ldots,m$}
 \State $\textbf{P}_j = \textsc{ColorCoding}(A_j,2 \gamma t/l, \gamma, \delta/l)$
\EndFor
\For {$h=1,\ldots,\log m$}
  \For {$j=1,\ldots,m / 2^{h}$}
  \State $\textbf{P}_j = \textbf{P}_{2j-1} \oplus_{2^h \cdot 2 \gamma t/l}^{\max} \textbf{P}_{2j}$
  \EndFor
\EndFor
\State \Return $\textbf{P}_1$
\end{algorithmic}
\end{algorithm}

\begin{proof}

We use the same arguments as in~\cite[Lemma 3.2]{bringmann}. First, we
split the set $L$ into $m$ disjoint subsets $L = A_1 \cup \ldots \cup A_m$ 
    (where $m = l/\log(l/\delta)$). Then, for every partition, we compute $\Pi(A_i)$ using
$\Oh(\log(l/\delta))$ items and probability $\delta/l$
using~Lemma~\ref{color-coding-lemma}. For every $A_i$,
$\Oh(T(\log(l)t/l)\log^3(l/\delta))$ time is required. Hence, for all $A_i$, we need
$\Oh(T(t)\log^3(l/\delta))$ time, as \minconv needs at
least linear time $T(n) = \Omega(n)$.

Ultimately, we need to combine arrays $\Pi(A_i)$ in a ``binary tree way''.
In the first round, we compute $\Pi(A_1) \oplus^{\max} \Pi(A_2), \Pi(A_3)
\oplus^{\max} \Pi(A_4),\ldots, \Pi(A_{m-1}) \oplus^{\max} \Pi(A_m)$. Then, in the second round, we join the products of
the first round in a similar way. We continue until we have joined all subsets.
This process yields us significant savings over just computing $\Pi(A_1) \oplus^{\max}
    \ldots \oplus^{\max} \Pi(A_m)$ because in round $h$, we need to compute
    \maxconv with numbers of order $\Oh(2^{h} t \log(l/\delta)/l)$, and there are at most
$\log{m}$ rounds. The complexity of joining them is as follows:

\begin{displaymath}
\sum^{\log{m}}_{h=1} \frac{m}{2^h} T(2^h \log(l/\delta) t/l) \log{t}) =
\Oh(T(t\log{t})\log{m}).
\end{displaymath}

Overall, we determine that the time complexity of
    the algorithm is $\Oh(T(t\log{t})\log^3(l/\delta))$
    (some logarithmic factors could be omitted if we assume that there exists
$\epsilon > 0$ such that $T(n) = \Omega(n^{1+\epsilon})$).

The correctness of the algorithm is based on~\cite[Claim 3.3]{bringmann}.
We take a subset of items $Y \subseteq L$ and let $Y_j = Y\cap A_j$. Claim 3.3 in~\cite{bringmann}
says that $\prob[|Y_j| \ge 6\log(l/\delta)] \le \delta/l$. Thus, we
can run ColorCoding procedure for $k=6\log(l/\delta)$ and still guarantee
a sufficiently high probability of success.

\end{proof}

\begin{theorem}[\knapsack $\rightarrow$ \maxconv]
    If \maxconv can be solved in $T(n)$ time, then 
    \knapsack can be solved in time $\Oh(T(t\log{t}) \log^3(n/\delta)\log{n})$
    with a probability of at least $1-\delta$.
\label{knapsack-maxconv}
\end{theorem}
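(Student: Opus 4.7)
All of the ingredients have been assembled in the two preceding subsections; what remains is to glue them together. The plan is to split the item set $\mathcal I$ by weight into $\Oh(\log n)$ layers, apply \textsc{ColorCodingLayer} to each layer to obtain its $\Pi$-array, and then combine the resulting per-layer arrays by a balanced binary tree of \maxconv operations truncated to length $t+1$.

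First I would partition the items into layers $Z_i$ containing items of weight in $(t/2^i, t/2^{i-1}]$ for $1\le i < \lceil \log n \rceil$, together with a tail layer $Z_{\lceil \log n \rceil}$ of items of weight at most $t/n$; items heavier than $t$ are discarded. As in \cite{bringmann}, any feasible packing can use at most $2^i$ items from layer $i$, so applying \textsc{ColorCodingLayer} to $Z_i$ with parameter $\ell = 2^i$ and failure probability $\delta' := \delta / \lceil\log n\rceil$ returns an array whose entries agree with the true $\Pi(Z_i)$ with probability at least $1-\delta'$, in time $\Oh(T(t\log t)\log^3(n/\delta))$ by the lemma proved just above. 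For the tail layer one invokes the same procedure with $\ell = n$.

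Next I would combine $\Pi(Z_1), \ldots, \Pi(Z_{\lceil\log n\rceil})$ along a balanced binary tree using $\oplus^{\max}$ truncated to the index range $[0,t]$. Each internal join is one \maxconv call on arrays of length at most $t+1$ and costs $\Oh(T(t))$, and there are $\Oh(\log n)$ such joins, contributing an additive $\Oh(T(t)\log n)$ that is dominated by the per-layer cost. Associativity of $\oplus^{\max}$ guarantees that the final array equals $\Pi(\mathcal I)$ whenever all layers were computed correctly, which holds with probability at least $1-\delta$ by a union bound over the $\lceil\log n\rceil$ layers. Summing across layers yields the claimed total $\Oh(T(t\log t)\log^3(n/\delta)\log n)$.

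The main obstacle is purely in bookkeeping rather than in any new idea: one must propagate the failure parameter carefully through the color-coding repetitions inside each layer and through the union bound across layers, and one must verify that truncating $\oplus^{\max}$ to weights $\le t$ is safe at every intermediate join, which is immediate because larger-weight configurations can never yield a feasible knapsack solution. Once these accounting points are discharged, the theorem follows as a direct instantiation of Bringmann's \subsetsum framework with \maxconv playing the role of the $(\vee,\wedge)$-convolution and with $\Pi$-arrays in place of indicator vectors.
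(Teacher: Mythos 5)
Your proposal is correct and follows essentially the same route as the paper's Algorithm 3: the same weight-based layering into $\lceil\log n\rceil$ buckets, the same per-layer invocation of \textsc{ColorCodingLayer} with $\ell=2^i$ and failure parameter $\delta/\lceil\log n\rceil$, the same union bound, and the same final complexity. The only (cosmetic) departure is that you merge the per-layer $\Pi$-arrays along a balanced binary tree while the paper folds them in sequentially; since every intermediate array is truncated to length $t+1$, both orderings cost $\Oh(T(t)\log n)$ and the analysis is unaffected.
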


\begin{algorithm}
    \caption{$\mathrm{Knapsack}(Z,t,\delta)$ (cf. \cite[Algorithm
    2]{bringmann}).}
\label{alg:knapsack}
\begin{algorithmic}[1] 
\State split $Z$ into $L_i = Z \cap (t/2^{i},t/2^{i-1}]$ for
$i=1,\ldots,\ceil{\log n} - 1$, and $L_{\ceil{\log n}} = Z \cap [0,t/2^{\ceil{\log n } -1}]$ 
\State $\textbf{W} = \emptyset$
\For {$i=1,\ldots, \ceil{\log n}$}
\State $\textbf{P}_i = \textsc{ColorCodingLayer}(L_i,t,i,\delta / \ceil{\log n})$
\State $\textbf{W} = \textbf{W} \oplus^{\max} \textbf{P}_i$
\EndFor
\State \Return $\textbf{W}$
\end{algorithmic}
\end{algorithm}

\begin{proof}

To obtain an algorithm for \knapsack, as mentioned before, we need to split $Z$ into
disjoint layers $L_i = Z \cap (t/2^i, t/2^{i-1}]$ and $L_{\ceil{\log{n}}} =
Z \cap [0,t/2^{\ceil{\log{n}} - 1}]$. Then, we compute $\Pi(L_i)$ for all $i$ and
join them using \maxconv. We present the pseudocode in
Algorithm~\ref{alg:knapsack}. It is based on~\cite[Algorithm 2]{bringmann}. Overall, $\Oh(T(t\log{t})
\log^3{(n/\delta)} \log{n} + T(t)\log{n}) = \Oh(T(t\log{t}) \log^3(n/\delta)\log{n})$ time is required. 
\end{proof}

Koiliaris~and~Xu~\cite{faster-subsetsum} considered a variant of \subsetsum where one needs to
check if there exists a subset that sums up to $k$ for all $k \in [0,t]$. Here,
we note that a similar extension for \knapsack is also equivalent to \maxconv
(see Section~\ref{problems} for the definition of \knapsackplus).

\begin{corollary}[\knapsackplus $\rightarrow$ \maxconv]
    If \maxconv can be solved in $T(n)$ time, then 
    \knapsackplus can be solved in $\Oh(T(t\log{t}) \log^3(t n/\delta)\log{n})$
    time with a probability of at least $1-\delta$.
\end{corollary}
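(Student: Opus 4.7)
The plan is to observe that the algorithm of Theorem~\ref{knapsack-maxconv} already produces, as its natural output, the entire array $\Pi(S)$ indexed by capacity: by construction, $\Pi(S)[t']$ equals the optimum of the \knapsack instance with capacity $t'$ whenever the random partitions behave well. Hence no algorithmic change is needed to solve \knapsackplus, and it remains only to tighten the error analysis so that all $t+1$ entries, rather than just the entry at position $t$, are simultaneously correct.

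The next step is a union bound over capacities. The underlying correctness guarantee inherited from \textsc{ColorCodingLayer} is stated for an \emph{individual} entry, namely that each coordinate of $\Pi(S)$ is correct with probability at least $1-\delta$; although the failure events at different capacities are not independent (they share the same random partitions), the probability that some entry is wrong is still at most $(t+1)$ times the per-entry failure probability. I would therefore invoke Theorem~\ref{knapsack-maxconv} with failure parameter $\delta/(t+1)$ in place of $\delta$, so that the probability that the full array $\Pi(S)$ is correct is at least $1-\delta$.

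Substituting $\delta' = \delta/(t+1)$ into the running time of Theorem~\ref{knapsack-maxconv} replaces the factor $\log^3(n/\delta)$ by $\log^3(n(t+1)/\delta) = \Oh(\log^3(tn/\delta))$, giving the claimed $\Oh(T(t\log t)\log^3(tn/\delta)\log n)$ bound. The main (and rather mild) obstacle will be chasing the per-entry correctness claim through the recursive constructions — the outer layer split in Algorithm~\ref{alg:knapsack} and the $\log m$ rounds of binary combining inside \textsc{ColorCodingLayer} — to confirm that it really is an entrywise guarantee with the stated probability; after that, the union bound is immediate and the time bound follows by direct substitution.
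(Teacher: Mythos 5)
Your proposal is correct and matches the paper's argument essentially verbatim: both observe that Algorithm~\ref{alg:knapsack} already returns the full array $\Pi(S)$ with a per-entry correctness guarantee, and both boost the per-entry success probability to $1-\delta/t$ (you use $\delta/(t+1)$, which is immaterial) so that a union bound over capacities yields overall success probability $1-\delta$, at the cost of replacing $\log^3(n/\delta)$ by $\log^3(tn/\delta)$.
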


Algorithm~\ref{alg:knapsack} returns an array
$\Pi(Z)$, where each entry $z \in \Pi(Z)$ is optimal with probability
$1-\delta$. Now, if we want to obtain the optimal solution for all knapsack
capacities in $[1,t]$, we need to increase the success probability to
$1-\frac{\delta}{t}$ so that we can use the union bound. 
Consequently, in this case, a single entry is faulty with a probability of at most $\delta/t$,
and we can upper bound the event, where at least one entry is incorrect
by $\frac{\delta}{t} t = \delta$. This introduces an additional $\mathrm{polylog}(t)$ factor
in the running time.

Finally, for completeness, we note that \knapsackplus is more general than
\knapsack. \knapsackplus returns a solution for all capacities $\le t$.
However, in the \knapsack problem, we are interested only in a capacity equal to exactly
$t$. 

\begin{corollary}[\knapsack $\rightarrow$ \knapsackplus]
    If \knapsackplus can be solved in $T(t,n)$ time, then \knapsack can be solved
    in $\Ot(T(t,n))$ time.
\end{corollary}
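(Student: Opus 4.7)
The plan is to observe that \knapsackplus is a strict generalization of \knapsack: \knapsackplus outputs the optimal value for every capacity $t' \in \{0, 1, \ldots, t\}$, whereas \knapsack only asks for the optimal value at the single capacity $t$. So the reduction is essentially definitional.

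More concretely, given an instance of \knapsack with items $\left((w_i, v_i)\right)_{i \in \mathcal{I}}$ and capacity $t$, I would simply invoke the assumed \knapsackplus algorithm on the very same input (same items, same capacity $t$). By assumption this runs in time $T(t, n)$ and produces the full array of optimal values $\left(\mathrm{OPT}(t')\right)_{t' = 0}^{t}$. I then return the single entry $\mathrm{OPT}(t)$, which is by definition the answer to the \knapsack instance. Reading off one entry takes $\Oh(1)$ additional time, so the total running time remains $\Oh(T(t, n))$.

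There is no real obstacle here: correctness follows immediately because the optimal value over subsets of total weight at most $t$ is exactly what \knapsackplus computes in its $t$-th coordinate, and the time bound is preserved since no preprocessing or postprocessing beyond a single array lookup is needed. This corollary is stated mainly for completeness, to make explicit that the earlier reduction from \knapsackplus to \maxconv is at least as strong as a reduction from \knapsack to \maxconv would be.
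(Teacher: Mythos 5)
Your proof is correct and matches the paper's reasoning exactly: the paper states this corollary immediately after observing that \knapsackplus is strictly more general than \knapsack (it returns solutions for all capacities up to $t$, while \knapsack needs only the value at $t$), so the reduction is just running the \knapsackplus algorithm and reading off the final entry.
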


The next corollary follows from the ring of reductions.

\begin{corollary}
An $\Ot((n+t)^{2-\eps})$ time algorithm for \knapsack implies
an $\Ot(t^{2-\eps'} + n)$ time algorithm for \knapsackplus.
\end{corollary}

\section{Other problems related to \minconv}
\label{sec:other}

\subsection{Maximum Consecutive Subsums Problem}
\label{sec:mcsp}

The \textsc{Maximum Consecutive Subsums Problem} (\mcsp) is, to the best of our
knowledge, the first problem explicitly proven to be nontrivially subquadratically equivalent to \textsc{MinConv}~\cite{mscp}.
In this section, we show the reduction from \mcsp to \maxconv for
completeness. Moreover, we present the reduction in the opposite direction,
which, in our opinion, is simpler than the original one.

\begin{theorem}[\mcsp $\rightarrow$ \maxconv]
If \maxconv can be solved in time $T(n)$,
then \mcsp admits an algorithm with running time $\Oh\left(T(n)\right)$.
\end{theorem}
\begin{proof}
Let $(a[i])_{i=0}^{n-1}$ be the input sequence.
Construct sequences of length $2n$ as follows: $b[k] = \sum_{i=0}^{k} a[i]$ for $k < n$ and $c[k] = -\sum_{i=0}^{n-k-1} a[i]$ for $k \le n$ (empty sum equals 0); otherwise, $b[k] = c[k] = -D$, where $D$ is two times larger than any partial sum.
Observe that
\begin{equation}
(b \oplus^{\max} c)[n+k-1] = \max_{0 \le j < n \atop 0 \le n+k-j-1 \le n} \sum_{i=0}^{j} a[i] - \sum_{i=0}^{j-k} a[i] = \max_{k - 1 \le j < n} \sum_{i=j-k+1}^{j} a[i].
\end{equation}
Thus, we can determine the maximum consecutive sum for each length $k$ after performing \maxconv.
\end{proof}

\begin{theorem}[\superadditivity $\rightarrow$ \mcsp]
If \mcsp can be solved in time $T(n)$,
then \superadditivity admits an algorithm with running time $\Oh\left(T(n)\right)$.
\end{theorem}
\begin{proof}
Let $(a[i])_{i=0}^{n-1}$ be the input sequence and $b[i] = a[i+1] - a[i]$.
The superadditivity condition $a[k] \le a[k+j] - a[j]$ (for all possible $k,j$)
can be translated into $a[k] \le \min_{0 \le j < n-k} \sum_{i=j}^{k+j-1} b[i]$ (for all $k$). Thus, computing the \mcsp vector on $(-b[i])_{i=0}^{n-2}$ is sufficient to verify whether the above condition holds.
\end{proof}

\subsection{Tree sparsity}
\label{sec:sparsity}

\begin{theorem}[\sparsity $\rightarrow$ \maxconv]
\label{thm:sparsity}
If \maxconv can be solved in time $T(n)$ and the function $T$
is superadditive,
then \sparsity admits an algorithm with running time $\Oh\left(T(n)\log^2{n}\right)$.
\end{theorem}
\begin{proof}
We take advantage of the heavy-light decomposition introduced
by~\naiveciteone{Sleator and Tarjan}{heavy-light}.
This technique has been utilized by~\naivecitemany{Backurs}{sparsity} to transform a nearly linear PTAS
for \maxconv to a nearly linear PTAS for \sparsity. 

We decompose a tree into a set of paths (which we call \emph{spines}) that will start from a \emph{head}.
First, we construct a \emph{spine} with a \emph{head} $s_1$ at the root of the tree.
We define $s_{i+1}$ as the child of $s_i$ for a larger subtree (in case of a draw, we choose any child)
and the last node in the spine as a leaf.
The remaining children of node $s_i$ become heads for analogous spines such that the whole tree is covered.
Note that every path from a leaf to the root intersects at most $\log{n}$ spines because
each spine transition doubles the subtree size.

\tikzstyle{vertex}=[circle,fill=black!25,minimum size=5pt]
\tikzstyle{selected vertex} = [vertex, fill=blue!25]
\tikzstyle{edge} = [draw,thick,-]
\tikzstyle{selected edge} = [draw,line width=5pt,blue!50,cap=round]

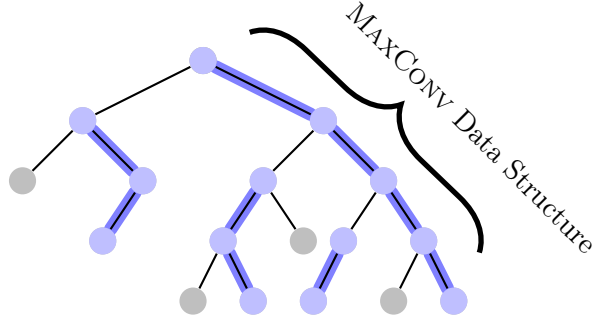
\begin{figure}[ht!]
\centering
\begin{tikzpicture}[scale=0.8]

    \begin{scope}
	\node[vertex] (a) at (0,0) {};
    \foreach \father/\direction/\level/\name in {{a/1/1/b},{a/-1/1/c}, 
                        {b/1/2/d},{b/-1/2/e},{c/1/2/f},{c/-1/2/g}, 
                        {d/1/3/dr},{d/-1/3/dl},{e/1/3/er},{e/-1/3/el}, 
                        {dr/1/4/drr},{dr/-1/4/drl},{dl/-1/4/dll}, 
                        {el/1/4/elr},{el/-1/4/ell}, 
                        {f/-1/3/fl}}
        {
        \node[vertex] (\name) at
        ([shift=({2*\direction/\level,-1})]\father) {};
        }

    \foreach \a/\b in {{a/b},{b/d},{d/dr},{dr/drr},{dl/dll},{e/el},{el/elr},{c/f},{f/fl}}
        {
            \path[selected edge] (\a) -- (\b);
        }

	\node[vertex] (a) at (0,0) {};
    \foreach \father/\direction/\level/\name in {{a/1/1/b},{a/-1/1/c}, 
                        {b/1/2/d},{b/-1/2/e},{c/1/2/f},{c/-1/2/g}, 
                        {d/1/3/dr},{d/-1/3/dl},{e/1/3/er},{e/-1/3/el}, 
                        {dr/1/4/drr},{dr/-1/4/drl},{dl/-1/4/dll}, 
                        {el/1/4/elr},{el/-1/4/ell}, 
                        {f/-1/3/fl}}
        {
        \path[edge] (\father) edge  (\name);
        }
    \foreach \a in {a,b,d,dr,drr,e,el,elr,dl,dll,c,f,fl}
        \node[selected vertex] at (\a) {};

        \draw [decorate, line width=2pt, decoration={brace, raise=7mm,
        amplitude=7mm}, above=1cm] (a) -- (drr) node [midway,
    above=5mm,xshift=17mm, rotate=-45] {\maxconv Data Structure} ;
    \end{scope}
	
\end{tikzpicture}
\caption{Schema of spine decomposition~\cite{sparsity}. Blue edges represent
    edges on the spine. For each spine, we build
    an efficient data structure that uses \maxconv (curly brackets). There are at most $\Oh(\log{n})$
different spines on any path from a leaf to the root.}
\end{figure}

Similar to~\cite{sparsity} for a node $v$ with a subtree of size $m$, we
want to compute the \emph{sparsity vector} $U = (U[0], U[1],\ldots,U[m])$,
where the index $U[i]$ represents the weight of the heaviest subtree rooted at $v$ with size $i$.
We compute sparsity vectors for all heads of spines in the tree recursively.
Let $(s_i)_{i=1}^\ell$ be a spine with a head $v$, and for all $i$, let $U^i$ indicate
the sparsity vector for the child of $s_i$ that is a head (i.e., the child with the smaller subtree).
If $s_i$ has less than two children, then $U^i$ is a zero vector.

For an interval $[a, b] \subseteq [1, \ell]$, let
$U^{a, b} = U^a \oplus^{\max} U^{a+1} \oplus^{\max} \dots \oplus^{\max} U^b$, and
    let $Y^{a, b}[k]$ be a vector such that for all $k$, $Y^{a,b}[k]$ is the
    weight of a subtree of size $k$ rooted at $s_{a}$ and not containing $s_{b+1}$ (if it exists). 
Let $c = \floor{\frac{a+b}{2}}$.
The $\oplus^{\max}$ operator is associative; hence, $U^{a, b} = U^{a, c} \oplus^{\max} U^{c+1, b}$.
To compute the vector $Y^{a,b}$, we consider two cases, depending on whether the optimal subtree contains $s_{c+1}$.

\begin{eqnarray*}
Y^{a, b}[k] &=& \max\bigg[Y^{a, c}[k],\quad \sum_{i=a}^c w(s_i) + \max_{k_1 + k_2 = k - (c - a + 1)} \Big(U^{a, c}[k_1] + Y^{c+1, b}[k_2] \Big)\bigg] \\
&=& \max\bigg[Y^{a, c}[k],\quad \sum_{i=a}^c w(s_i) + \Big(U^{a, c} \oplus^{\max} Y^{c+1, b}\Big)\big[k - (c - a + 1)\big] \bigg]
\end{eqnarray*}

Recall, that $w : V(T) \rightarrow \mathbb{N}_{\ge 0}$ is the weight
function from the definition of the problem (see Section~\ref{sec:otherproblems}). Using the presented formulas, we reduce the problem of computing $X^v = Y^{1, \ell}$ to
subproblems for intervals $[1, \frac{\ell}{2}]$ and $[\frac{\ell}{2}+1, \ell]$, and we merge the results with two $(\max,+)$-convolutions.
Proceeding further, we obtain $\log{\ell}$ levels of recursion, where the sum of convolution sizes
on each level is $\Oh(m)$, which results in a total running time of $\Oh\left(T(m)\log{m}\right)$
(recall that $T$ is superadditive).


The heavy-light decomposition guarantees that there are at most $\Oh(\log{n})$
different spines on a path from a leaf to the root. Moreover, we compute
sparsity vectors for all heads of the spine, with at most $\log{n}$ levels of
recursion. In each recursion, we execute the \maxconv procedure. Hence,
we obtain a running time of $\Oh(T(n)\log^2{n})$.
\end{proof}

\subsection{$l_\infty$-Necklace Alignment}
\label{sec:necklace}

In this section, we study the \necklace alignment problem,
which has been shown to be reducible to \minconv~\cite{necklaces}.
Even though we were not able to prove it as equivalent to \minconv,
we have observed that \necklace is tightly connected to the $(\min,+)$-convolution, which leads to a reduction from a related problem -- \maxconvlower.
This opens an avenue for expanding the class of problems equivalent to \minconv; however, it turns out that we first need to better understand the nondeterministic complexity of \minconv.
We elaborate on these issues in this and the following section.

\begin{theorem}[\maxconvlower $\rightarrow$ \necklace]
\label{thm:necklace}
If \necklace can be solved in time $T(n)$,
then \maxconvlower admits an algorithm with running time $\Oh\left(T(n)\log{n}\right)$.
\end{theorem}
\begin{proof}
Let $a,b,c$ be the input sequences for \maxconvlower.
    A \emph{combination} is the sum of any choice of $m$ elements from these sequences. More
    formally:

    \begin{definition}[combination]
        A \emph{combination} of length $m$ is a sum:

        \begin{displaymath}
            f_1 \cdot e_1[k_1] + f_2 \cdot e_2[k_2] + \ldots f_m \cdot e_m[k_m]
            ,
        \end{displaymath}

        where $e_i \in \{a,b,c\}$, $f_i \in \{-1, 1\}$ and $k_i \in \{0,\ldots, n-1\}$.

        The \emph{order} of this combination is as follows:

        \begin{displaymath}
            \sum_{i=1}^m f_i \cdot k_i
            .
        \end{displaymath}
    \end{definition}


We can assume the following properties of the input sequences w.l.o.g.
\begin{enumerate}
\item \emph{We may assume that the sequences are nonnegative and that $a[i] \le c[i]$ for all $i$.}
To guarantee this, we add $C_1$ to $a$, $C_1+C_2$ to $b$, and $2C_1+C_2$ to $c$ for appropriate positive constants $C_1,C_2$.

\item \emph{We can assume that the combinations of order $\le n$ that contain the last element
    of sequence $b$ with a positive coefficient are positive.} We can achieve this property by artificially
    appending any $b[n]$ that is larger than the sum of all elements. Note that
    since it is the last element, it does not influence the result of the \maxconvlower instance.

\item \emph{Any combination of positive order and length bounded by $L$
has a nonnegative value.} One can guarantee this by adding a linear function $Di$ to all sequences.
As the order of the combination is positive, the factors at $D$ sum up
to a positive value.
It suffices to choose $D$ equal to the maximum absolute value of an element
        times a parameter $L$ that will be set to 10.
Note that previous inequalities compare combinations of the same order, 
and so they remain unaffected.
\end{enumerate}

These transformations might increase the values of the elements to $\Oh(nWL^2)$.
Let $B = b[n],\, B_1 = b[n-1],\, B_2 = b[n] - b[1]$.
We define necklaces $x,y$ of length $2B$ with $N=2n$ beads each.

\begin{equation*}
\resizebox{\hsize}{!}{
$\begin{array}{ccccccccccc}
x = \big(&a[0],& a[1],& \dots,& a[n-1],& B + c[0],& B + c[1],& \dots,& B + c[n-2],& B + c[n-1] &\big), \\
y = \big(&B_1 - b[n-1],& B_1 - b[n-2],& \dots,& B_1 - b[0],& B+B_2-b[n-1],& B+B_2-b[n-2],& \dots,& B+B_2-b[1],& 2B&\big).
\end{array}$}
\end{equation*}

Property (3) implies monotonicity of the sequences because for any
$0\le i<j \le n$, the combination $a[j] - a[i]$ is greater than zero. 

Let $d(x[i],y[j])$ be the forward distance between $x[i]$ and $y[j]$, i.e.,
$y[j] - x[i]$ plus the length of the necklaces if $j < i$.
For all $k$, define $M_k$ to be $\max_{i \in [0,N)} \dxy - \min_{i \in [0,N)} \dxy$.
In this setting, \cite[Fact 5]{necklaces} says
    that for a fixed $k$, the optimal solution has a value of $\frac{M_k}{2}$.

We want to show that for $k \in [0,n)$, the following holds:
\begin{eqnarray*}
\min_{i \in [0,2n)} \dxy &=& B_1 - \max_{i + j\, =\, n-k-1} (a[i] + b[j]), \\
\max_{i \in [0,2n)} \dxy &=& B - c[n-k-1].
\end{eqnarray*}


\begin{figure}[ht!]
    \centering
    \begin{tikzpicture}[scale=0.7,
                        >=triangle 45,
                       mydecor/.style = {decoration = {markings, 
                                                       mark = at position #1 with {\arrow{<}}}
                                       },
                       mydecol/.style = {decoration = {markings, 
                                                       mark = at position #1 with {\arrow{>}}}
                                       }
                      ]
     \draw[postaction = {mydecor=0.4 ,decorate}, 
           postaction = {mydecor=0.58333 ,decorate}] (0,0) circle (3cm);

     \draw[postaction = {mydecol=0.375 ,decorate}, 
           postaction = {mydecol=0.875 ,decorate}] (0,0) circle (1.5cm);

     \draw[->] (0.5,1) arc (60:120:1cm) node[below,xshift=0.3cm] {$b$};
     \draw[->] (-0.5,-1) arc (-120:-60:1cm) node[above,xshift=-0.3cm] {$b$};
     \draw[->] (-1,3.5) arc (120:60:2cm) node[xshift=-0.7cm] {$a$};
     \draw[->] (1,-3.5) arc (-60:-120:2cm) node[xshift=0.7cm] {$c$};

   \draw[line width=1.5pt] (-3,0) ellipse (.2cm and 0cm);
   \draw[line width=1.5pt] (3,0) ellipse (.2cm and 0cm);
   \draw[line width=1.5pt] (-1.5,0) ellipse (.2cm and 0cm);
   \draw[line width=1.5pt] (1.5,0) ellipse (.2cm and 0cm);

	\draw[fill = black] (-1.2994, 0.75) node (wew1) {} circle (0.075cm);
    \draw[fill = black] (1.35946,-0.633927) node (wew2) {} circle (.075cm);
    \draw[fill = black] (0.4, -1.4942) node (wew3) {} circle (0.075cm);
	\draw[fill = black] (-0.51303, -1.40954) node (wew4) {} circle (0.075cm);

	\draw[fill = black] (1.928362, 2.2981333) node (zew1) {} circle (0.075cm);
	\draw[fill = black] (1.02606, -2.819077) node (zew2) {} circle (0.075cm);
	\draw[fill = black] (0, -3) node (zew3) {} circle (0.075cm);
	
	\draw (wew1) -- (-3,0);
	\draw (1.5,0) -- (zew1);
	\draw (wew2) -- (3,0);
	\draw (wew3) -- (zew2);
	\draw (wew4) -- (zew3);
	
    \draw[fill=white] (0,2.25) circle (0.0cm) node {I};
    \draw[fill=white] (2.25,0.5) circle (0.0cm) node {II};
    \draw[fill=white] (1.5,-1.5) circle (0.0cm) node {III};
    \draw[fill=white] (0.25,-2.25) circle (0.0cm) node {IV};
    \draw[fill=white] (-1.5,-1.5) circle (0.0cm) node {V};
   %

   \end{tikzpicture}
   \caption{Five areas that correspond to the five types of connections between
     beads. The inner circle represents
   two repetitions of the sequence $b$. The outer circle consists of the sequence $a$ and then
     the sequence $c$.}
   \label{fig:necklace}
\end{figure}


There are five types of connections between beads (see Figure~\ref{fig:necklace}).
\begin{align*}
\dxy = \left\{\begin{array}{lll}
B_1 - a[i] - b[n-k-1-i] & \quad i \in [0,n-k-1], & \mathrm{(I)} \\
B+B_2-a[i]-b[2n-k-1-i] & \quad i \in [n-k,n-1], & \mathrm{(II)}\\
B_2 - b[2n-k-1-i] - c[i-n] & \quad i \in [n,2n-k-2], & \mathrm{(III)}\\
B - c[n-k-1]& \quad i = 2n-k-1, &\mathrm{(IV)} \\
B + B_1  - b[3n-k-1-i] - c[i-n] & \quad i \in [2n-k,2n-1]. & \mathrm{(V)}
\end{array}
\right.
\end{align*}

All formulas form combinations of length bounded by 5; thus, we can apply properties (2) and (3).
Observe that the order of each combination equals $k$, except for $i = 2n-k-1$,
where the order is $k+1$.
Using property (3), we reason that $B - c[n-k-1]$ is indeed the maximal forward distance.
We now show that the minimum lies within the group~$\mathrm{(I)}$.
First, note that these are the only combinations with no occurrences of $b[n]$.
We claim that every distance in group~$\mathrm{(I)}$ is upperbounded by all distances in other groups.
This is clear for group $\mathrm{(IV)}$ because the orders differ. For other groups, we can use property (2), as the
combinations in question have the same order and only the one not in group~$\mathrm{(I)}$ contains $b[n]$.

For $k < n$, the condition $M_k < B - B_1$ is equivalent
to $c[n-k-1] > \max_{i + j\, =\, n-k-1} (a[i] + b[j])$.
If such a $k$ exists, i.e., the answer to \maxconvlower for sequences $a,b,c$ is NO,
then $\min_k M_k < B - B_1$ and the return value is less than $\frac{1}{2}(B - B_1)$.

Finally, we need to prove that $M_k \ge B - B_1$ for all $k$ if such a $k$ does not exist.
We have already verified this to be true for $k < n$.
Each matching for $k\ge n$ can be represented as swapping sequences $a$ and $c$ inside the
necklace $x$, developed via an index shift of $k-n$.
The two halves of the necklace $x$ are analogous; thus, all prior observations
of the matching structure remain valid.

If the answer to \maxconvlower for sequences $a,b,c$ is YES, then
$\forall_{k \in [0,n)} \exists_{i+j=k} a[i] + b[j] \ge c[k]$.
Property (1) guarantees that $a\le c$; thus, we conclude that
$\forall_{k \in [0,n)} \exists_{i+j=k} c[i] + b[j] \ge a[i] + b[j] \ge c[k] \ge a[k]$,
and by the same argument as before, the cost of the solution is at least $\frac{1}{2}(B - B_1)$.
\end{proof}

Observe that both \necklace and \maxconvlower admit simple linear
nondeterministic algorithms.
For \maxconvlower, it is sufficient to either assign each $k$
a single condition $a[i] + b[k-i] \ge c[k]$ that is satisfied
or to nondeterministically guess a value of $k$ for which no inequality holds.
For \necklace, we define a decision version of the problem by asking if there
is an alignment of the value bounded by $K$ (the problem is self-reducible via binary search).
For positive instances, the algorithm simply nondeterministically guesses $k$, inducing an optimal solution.
For negative instances, $M_k > 2 K$ must hold for all $k$.
Therefore, it suffices to nondeterministically guess for each $k$ a pair $i,j$ such that
$d\big(x[i], y[(k+i)\; (\text{mod}\; n)]\big) - d\big(x[j], y[(k+j)\; (\text{mod}\; n)]\big) > 2 K$.

In Section~\ref{sec:nondet}, we will show that \maxconvupper
admits an $\Oh\left(n^{1.5}\mathrm{polylog}(n)\right)$
nondeterministic algorithm (see Lemma~\ref{cor:maxconv-nondet})
so, in fact, there is no obstacle to the existence of a subquadratic reduction from \maxconvlower to \maxconvupper.
However, the nondeterministic algorithm for \textsc{3sum} exploits techniques significantly different from ours,
including modular arithmetic. A potential reduction would probably need to rely on
some different structural properties of \maxconv.

\section{Nondeterministic algorithms}
\label{sec:nondet}

Recently, Abboud et al.~\cite{bicriteria} proved that the running time for
the \textsc{Subset Sum} problem cannot be improved to $\Oh(t^{1-\eps}2^{o(n)})$, assuming the SETH.
It is tempting to look for an analogous lower bound for \textsc{Knapsack} that
would make the $\Oh(nt)$-time algorithm tight.
In this section, we take advantage of the nondeterministic lens introduced by Carmosino et al.~\cite{nondet-seth} to argue that the existence of this lower bound for \uknapsack is unlikely.

We recall that by a time complexity of a nondeterministic algorithm, we refer to a bound on running times for
both nondeterministic and co-nondeterministic routines determining whether an instance belongs to the language.
Assuming the Nondeterministic Strong Exponential Time Hypothesis (NSETH), we cannot break the $\Oh(2^{(1-\varepsilon)n})$ barrier for SAT
even with nondeterministic algorithms.

The informal reason to rely on the NSETH is that if we decide to base lower bounds on the SETH, then we should believe that SAT is indeed a very hard problem that does not admit any hidden structure that has eluded researchers so far.
On the other hand, the NSETH can be used to rule out deterministic reductions from SAT to problems with nontrivial nondeterministic algorithms.
This allows us to argue that in some situations basing a hardness theory on the SETH can be a bad idea.
Moreover, disproving the NSETH would imply nontrivial lower bounds on circuit sizes for $\mathbf{E^{NP}}$~\cite{nondet-seth}.

We present a nondeterministic algorithm for the decision version of \uknapsack with running time $\Oh(t\sqrt{n}\log^3(W))$, where $W$ is the target value.
This means that a running time $\Oh(n^{1-\varepsilon}t)$ for \uknapsack cannot be
ruled out with a deterministic reduction from SAT, under the assumption of the NSETH
(for small $\varepsilon < \frac{1}{2}$).

We begin with an observation that a nontrivial 
nondeterministic algorithm for \textsc{3sum}
entails a~similar result for \maxconvupper.

\begin{lemma}\label{cor:maxconv-nondet}
\maxconvupper admits a nondeterministic $\Oh\left(n^{1.5}\mathrm{polylog}(n)\right)$-time algorithm.
\end{lemma}
\begin{proof}
By combining Theorem \ref{thm:3sum} (which involves a reduction from \maxconvupper to
\convsum), the deterministic (i.e., nonrandomized) reduction from \convsum to \textsc{3sum}~\cite{3sum1},
and the nondeterministic $\Oh\left(n^{1.5}\mathrm{polylog}(n)\right)$-time
algorithm for \textsc{3sum} from~\cite[Lemma 5.8]{nondet-seth},
we obtain an analogous algorithm for \maxconvupper.
\end{proof}

In the next step, we require a more careful complexity analysis of the
nondeterministic algorithm for \textsc{3um} developed by Carmosino et al.~\cite[Lemma
5.8]{nondet-seth}.
Essentially, we claim that the running time can be bounded by $\Oh(\sqrt{n_1n_2n_3}\log^2(W))$, where
$n_1,n_2,n_3$ are sizes of the input sets.
This is just a reformulation of the original proof,
where an $\Oh(n^{1.5})$ nondeterministic time algorithm is given, which we have
presented in the Appendix~\ref{sec:3sum-nondet} for completeness.

In the decision version of \uknapsack, we are additionally given a threshold $W$, and we need to determine whether there is a multiset of items with a total weight of at most $t$ and a total value of at least $W$.

\begin{theorem}
\label{thm:fast-nondet-uknapsack}
The decision version of \uknapsack admits an $\Oh(t\sqrt{n}\log^3(W))$ nondeterministic algorithm.
\end{theorem}
\begin{proof}
We can assume that $n \le t$.
If we are given a YES-instance, then we can just nondeterministically guess the solution
and verify it in $\Oh(t)$ time.

To show that an instance admits no solution, we nondeterministically guess a proof involving an array
$(a[k])_{k=0}^t$ such that $a[k]$ is an upper bound for the total value of items with weights summing to at most $k$.
To verify the proof, we need to check that $a[0] = 0$, $a[t] < W$, $a$ is nondecreasing, and, for each $k$ and each item $(w_i,v_i)$, $a[k] + v_i \le a[k + w_i]$ holds.
Let $(b[k])_{k=0}^t$ be a sequence defined as follows:
if there is an item with $w_i = k$, then we set $b[k] = v_i$ (if there are
multiple items with the same weight, we choose the most valuable one) and 0 otherwise.
The latter condition is equivalent to determining if $a \oplus^{\max} b \le a$, which is an instance of \textsc{MaxConv UpperBound}
with elements bounded by $W$.

Note that the sequence $b$ contains only $n$ nonzero elements.
After we have verified (in $\Oh(t)$ time) that $a$ is nondecreasing, we know that $b[j] = 0$
implies $a[i] + b[j] \le a[i+j]$.
This means that we can neglect the zero values in sequence $b$ when applying the reduction in Theorem~\ref{thm:3sum}.
After performing the reduction, we obtain $\Oh(\log W)$ instances of $\textsc{3sumConv}$ with sequences $x, y, z$ of length $t$
but with the additional knowledge that there are only $n$ indices $j$ such that $x[i] + y[j] > z[i+j]$ might hold. 
In the end, we perform a deterministic reduction from \textsc{3sumConv} to \textsc{3sum} in time $\Oh(t)$~\cite{3sum1}.
Since we can omit all but $n$ indices in sequence $y$, we obtain $\Oh(\log W)$ instances of \textsc{3sum}
with set sizes of $t,n$, and $t$.
The claim follows by applying Lemma~\ref{lem:3sum-nondet} and the fact that
nonrandomized reductions preserve the nondeterministic running time.
\end{proof}


From~\cite[Corollary 5.2]{nondet-seth} and the nondeterministic algorithm from
Theorem~\ref{thm:fast-nondet-uknapsack}, it follows that the reduction from any
SETH-hard problem to \uknapsack is unlikely:

\begin{corollary}
    Under the NSETH, there is no deterministic (fine-grained) reduction from the SETH to
    solving \uknapsack in time $\Oh(n^{0.5+\gamma} \cdot t)$ for any $\gamma > 0$.
\end{corollary}

For a natural (but rather technical) definition of \emph{fine-grained reduction}, see~\cite[Definition 3.1]{nondet-seth}.

\section{Conclusions and future work}

In this paper, we undertake a systematic study of \minconv as 
a hardness assumption and prove the subquadratic equivalence of \minconv with \superadditivity, \uknapsack, \knapsack, and \sparsity.
The \minconv conjecture 
is stronger than the well-known conjectures APSP and 3SUM.
Proving that \minconv is equivalent to either APSP or 3SUM would solve a long-standing open problem.
An intriguing question is to determine whether the \minconv conjecture is also stronger than OV.

By exploiting the fast $\Oh(n^2/2^{\Omega(\log n)^{1/2}})$ algorithm for \maxconv, we automatically
obtain $o(n^2)$-time algorithms for all problems in the class. This gives us the
first (to the best of our knowledge) subquadratic algorithm for \superadditivity and improves exact algorithms for
\sparsity by a polylogarithmic factor (although this does not lie within the scope
of this paper).

One consequence of our results is a new lower bound on \knapsack. 
It is known that an $\Oh(t^{1-\epsilon}n^{\Oh(1)})$ algorithm for \knapsack contradicts
the \textsc{SetCover} conjecture~\cite{hard-as-cnf-sat}. 
Here, we show that an $\Oh((n+t)^{2-\epsilon})$ algorithm contradicts the \minconv conjecture. 
This does not rule out an $\Oh(t+n^{\Oh(1)})$ algorithm, which leads to another interesting open problem.

Recently,~\naivecitemany{Abboud}{bicriteria} replaced the \textsc{SetCover}
conjecture with the SETH for \subsetsum.
We have shown that one cannot exploit the SETH to prove that the $\Oh(nt)$-time algorithm 
for \uknapsack is tight.
The analogous question regarding \knapsack remains open.

Finally, it is open whether \maxconvlower is equivalent to \minconv, which would
imply an equivalence between \necklace and \minconv.

\section*{Acknowledgments}

This work is part of a project TOTAL that has received funding from the European
Research Council (ERC) under the European Union’s Horizon 2020 research and
innovation programme (grant agreement no. 677651). We would like to thank Amir
Abboud, Karl Bringmann, Marvin K{\"{u}}nnemann, Oliver Serang and Virginia Vassilevska Williams for the helpful
discussions. The authors would like to thank the anonymous reviewers for their
valuable comments and suggestions.

\bibliographystyle{plainnat}
\bibliography{bib}

\appendix

\section{Reduction to 3SUM}
\label{sec:3sum}

In this section, we show a connection between \maxconv and the \textsc{3sum} conjecture. This
reduction is widely known in the community but, to the best of our knowledge, has
never been explicitly written. We include it in this appendix for completeness.

In this paper, we showed an equivalence between \maxconv and \maxconvupper (see
Theorem~\ref{thm:main}). Also, it is known that the \convsum problem is subquadratically
equivalent to \textsc{3sum}~\cite{3sum1}. Hence, the following theorem suffices.

\begin{theorem}[\maxconvupper $\rightarrow$ \convsum]
\label{thm:3sum}
If \convsum can be solved in time $T(n)$,
then \maxconvupper admits an algorithm with running time $\Oh\left(T(n)\right)$.
\end{theorem}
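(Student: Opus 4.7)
The approach is to reduce \maxconvupper to \convsum by converting the strict-inequality existence test into an equality existence test that \convsum can handle directly. The key observation is that $(a,b,c)$ violates the \maxconvupper condition exactly when there exist indices $i,j$ and a positive integer slack $s \ge 1$ with $a[i] + b[j] - s = c[i+j]$, and this last equation is of the form \convsum is designed to detect, once the extra variable $s$ is embedded into the positions of the new sequences.

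Concretely, I would fix an integer $D := 6W + 2$ that upper-bounds any realizable slack (since all input values lie in $[-W,W]$, actual slacks are at most $3W$ in absolute value), and build sequences $A,B,C$ of length $\Theta(nD)$. For $i \in [0,n)$ set $A[iD] := a[i]$; for $j \in [0,n)$ and $s \in [1,D-1]$ set $B[jD + s] := b[j] - s$; for $k \in [0,2n-1)$ and $s \in [1,D-1]$ set $C[kD + s] := c[k]$. All remaining entries of $A$ and $B$ receive a very negative sentinel, and all remaining entries of $C$ a very positive sentinel, each large enough in magnitude that no sum involving a sentinel can match any non-sentinel value.

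Correctness then follows by inspection: any witness $A[I] + B[J] = C[I+J]$ must avoid sentinels, which forces $I = iD$ and $J = jD + s$ with $s \in [1, D-1]$, so $I+J = (i+j)D + s$ lies on the support of $C$. The equation collapses to $s = a[i] + b[j] - c[i+j] \ge 1$, so a \convsum witness exists iff some pair $(i,j)$ has strictly positive slack, i.e., iff $(a,b,c)$ is a NO-instance of \maxconvupper. The bound $3W < D$ guarantees that every actual violation is captured.

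The main obstacle, and what I would spend most of the effort on, is the length blow-up by a factor of $D = \Theta(W)$: the single \convsum call now operates on an instance of length $\Theta(nW)$, costing $T(\Theta(nW))$ rather than $T(n)$. To recover the stated $\Oh(T(n))$ bound, I would try to leverage the paper's convention that $T$ absorbs $\mathrm{polylog}(W)$ factors---comfortable when $W$ is polylogarithmic in $n$---or, for larger $W$, look for a more parsimonious encoding, e.g.\ replacing the one-shot reduction by $\Oh(\log W)$ separate \convsum calls each on an instance of length $\Oh(n)$ via a binary search over the slack magnitude, so that the overall overhead collapses to a polylogarithmic factor in $W$.
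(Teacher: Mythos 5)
Your construction is correct as far as it goes, and the basic reduction idea (strict inequality becomes an equality with a positive slack $s$, and one can attach $s$ to the index) is sound. The encoding $A[iD]=a[i]$, $B[jD+s]=b[j]-s$, $C[kD+s]=c[k]$, with sentinels elsewhere, does produce a valid \convsum instance whose YES-answer coincides with a violation of \maxconvupper. But, as you yourself observe, the instance has length $\Theta(nW)$, and the theorem's $\Oh(T(n))$ bound does not allow a polynomial-in-$W$ blow-up of the argument of $T$. The paper's convention only hides $\mathrm{polylog}(W)$ factors, so your first fallback works only when $W$ is polylogarithmic in $n$, which is not what the statement promises.

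The second fallback does not close the gap. ``Binary search over the slack magnitude'' has no monotone structure to exploit: the slack $a[i]+b[j]-c[i+j]$ is a different quantity for each pair $(i,j)$, you are looking for \emph{existence} of some pair with slack $\ge 1$, and \convsum only answers equality queries, not range queries. Any attempt to test ``is there a pair with slack in $[\ell,r]$'' via \convsum runs back into the same problem of encoding a range of possible slack values. The paper instead uses Proposition~3.4 of~\cite{weighted-subgraphs}, which characterizes $x+y>z$ by $\Oh(\log W)$ \emph{equality} conditions on binary prefixes of $x,y,z$ (roughly: either some $k$-bit prefix sum overshoots $\mathrm{pre}_k(z)$ by exactly one, or the prefixes tie while a carry is about to be created). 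Each such condition translates to a \convsum instance of length $n$, giving $\Oh(\log W)$ calls in total. That lemma is the missing ingredient in your argument; without it, or something structurally equivalent, the $W$-factor blow-up is unavoidable in your encoding.
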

The proof heavily utilizes \cite[Proposition 3.4, Theorem
3.3]{weighted-subgraphs}, which we present here for completeness. 
$\mathrm{pre}_i(x)$ denotes the binary prefix of $x$ of length $i$,
where the most significant bit is considered the first. 
In the original statement (Proposition 3.4~\cite{weighted-subgraphs}), the prefixes are alternately treated as integers or strings.
We modify the notation slightly to work only with integers.

\begin{lemma}[Proposition 3.4 \cite{weighted-subgraphs}]
\label{lem:pre}
For three integers $x, y, z$, we have that $x + y > z$ iff one of the following holds:
\begin{enumerate}
\item there exists a $k$ such that $\mathrm{pre}_k(x) + \mathrm{pre}_k(y) = \mathrm{pre}_k(z) + 1$,
\item there exists a $k$ such that
\begin{eqnarray}
\mathrm{pre}_{k+1}(x) &= &2\cdot\mathrm{pre}_k(x) + 1, \label{line:pre1} \\
\mathrm{pre}_{k+1}(y) &= &2\cdot\mathrm{pre}_k(y) + 1, \\
\mathrm{pre}_{k+1}(z) &= &2\cdot\mathrm{pre}_k(z), \label{line:pre3} \\
\mathrm{pre}_k(z) & = & \mathrm{pre}_k(x) + \mathrm{pre}_k(y). \label{line:pre4}
\end{eqnarray}
\end{enumerate}
\end{lemma}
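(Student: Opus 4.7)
The plan is to prove the lemma by viewing $\mathrm{pre}_k$ as reading the binary expansions of $x,y,z$ one bit at a time and tracking the running signed difference. Pad all three integers to $n$ bits, so that $\mathrm{pre}_n(x)=x$, $\mathrm{pre}_n(y)=y$, $\mathrm{pre}_n(z)=z$, and let $b^u_k\in\{0,1\}$ denote the $k$-th bit of $u\in\{x,y,z\}$. Then $\mathrm{pre}_{k+1}(u)=2\mathrm{pre}_k(u)+b^u_{k+1}$. Define
\[
d_k \;=\; \mathrm{pre}_k(x)+\mathrm{pre}_k(y)-\mathrm{pre}_k(z),
\]
so that $d_0=0$, $d_n=x+y-z$, and
\[
d_{k+1} \;=\; 2d_k+\Delta_{k+1},\qquad \Delta_{k+1}=b^x_{k+1}+b^y_{k+1}-b^z_{k+1}\in\{-1,0,1,2\}.
\]
Condition~(1) of the lemma is exactly the statement $d_k=1$, while condition~(2) is exactly the statement $d_k=0$ together with $\Delta_{k+1}=2$ (i.e.\ $d_{k+1}=2$ arriving from $d_k=0$).

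For the $(\Rightarrow)$ direction, assume $x+y>z$, so $d_n\ge 1$ while $d_0=0$. Let $k$ be the smallest index with $d_k>0$; then $d_{k-1}\le 0$. The recurrence gives $d_k\le 2d_{k-1}+2$, so if $d_{k-1}\le -1$ then $d_k\le 0$, a contradiction. Hence $d_{k-1}=0$, which forces $d_k=\Delta_k\in\{1,2\}$. If $d_k=1$ the prefixes satisfy $\mathrm{pre}_k(x)+\mathrm{pre}_k(y)=\mathrm{pre}_k(z)+1$, giving condition~(1). If $d_k=2$, then $\Delta_k=2$ forces $(b^x_k,b^y_k,b^z_k)=(1,1,0)$, which is exactly the bit pattern of equations (\ref{line:pre1})--(\ref{line:pre3}) at the index $k{-}1$, while $d_{k-1}=0$ is precisely (\ref{line:pre4}); so condition~(2) holds with shift $k{-}1$.

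For the $(\Leftarrow)$ direction, write $u=\mathrm{pre}_k(u)\cdot 2^{n-k}+s_k(u)$ with $s_k(u)\in[0,2^{n-k}-1]$ for each $u\in\{x,y,z\}$, so that
\[
x+y-z \;=\; d_k\cdot 2^{n-k}+\bigl(s_k(x)+s_k(y)-s_k(z)\bigr),
\]
and the parenthesised term lies in $[-(2^{n-k}-1),\,2(2^{n-k}-1)]$. If condition~(1) holds at index $k$ then $d_k=1$ and $x+y-z\ge 2^{n-k}-(2^{n-k}-1)=1>0$. If condition~(2) holds at index $k$, the same algebra at index $k{+}1$ yields $d_{k+1}=2$ and hence $x+y-z\ge 2\cdot 2^{n-k-1}-(2^{n-k-1}-1)\ge 1>0$.

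The reasoning is essentially a carry analysis, so the only real subtlety is making sure the case $d_{k-1}<0$ really is excluded by the bounded increment $\Delta_k\le 2$; everything else is bookkeeping about prefix/suffix decompositions. I would therefore treat the recurrence lemma $d_{k+1}=2d_k+\Delta_{k+1}$ as the centerpiece and let the two implications fall out as short case analyses, exactly as sketched above.
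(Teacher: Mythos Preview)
Your proof is correct. The recurrence $d_{k+1}=2d_k+\Delta_{k+1}$ with $\Delta_{k+1}\in\{-1,0,1,2\}$ is the right framing, and your minimality argument for the forward direction cleanly handles the only nontrivial point (ruling out $d_{k-1}<0$).

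Note, however, that the paper does not give its own proof of this lemma: it is quoted verbatim as Proposition~3.4 from~\cite{weighted-subgraphs} and used as a black box in the proof of Theorem~\ref{thm:3sum}. So there is nothing in the present paper to compare your argument against; you have supplied a self-contained proof where the authors chose to cite one.
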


\begin{proof}[Proof of Theorem \ref{thm:3sum}]
We translate the inequality $a[i] + b[j] > c[i+j]$ from \maxconvupper
to an alternative of $2\log{W}$ equations.
For each $0 \le k \le \log{W}$, we construct two instances of \convsum related
to the conditions in Lemma~\ref{lem:pre}.
For the first condition, we create sequences $a^k[j] = \mathrm{pre}_k(a[j]),\, b^k[j] = \mathrm{pre}_k(b[j]),\, c^k[j] = \mathrm{pre}_k(c[j]) + 1$.
For the second one, we choose a value of $D$ that is two times larger than the absolute value of any element and set

\begin{eqnarray*}
\tilde{a}^k[j] &=
  &\begin{cases}
    \mathrm{pre}_k(a[j])       & \quad \text{if } \mathrm{pre}_{k+1}(a[j]) = 2\cdot\mathrm{pre}_k(a[j]) + 1, \\
    -D  & \quad \text{otherwise}, \\
  \end{cases} \\ 
\tilde{b}^k[j] &=
  &\begin{cases}
    \mathrm{pre}_k(b[j])       & \quad \text{if } \mathrm{pre}_{k+1}(b[j]) = 2\cdot\mathrm{pre}_k(b[j]) + 1, \\
    -D  & \quad \text{otherwise}, \\
  \end{cases} \\
\tilde{c}^k[j] &=
  &\begin{cases}
    \mathrm{pre}_k(c[j])       & \quad \text{if } \mathrm{pre}_{k+1}(c[j]) = 2\cdot\mathrm{pre}_k(c[j]), \\
    D  & \quad \text{otherwise}. \\
  \end{cases} 
\end{eqnarray*}
Observe that if any of the conditions \ref{line:pre1} -- \ref{line:pre3} is not satisfied, then the unrolled
formula $\tilde{a}^k[i] + \tilde{b}^k[j] = \tilde{c}^k[i+j]$ contains
at least one summand $D$ and thus cannot be satisfied.
Otherwise, it reduces to the condition~\ref{line:pre4}.

The inequality $a[i] + b[j] > c[i+j]$ holds for some $i,j$ iff
one of the constructed instances of \convsum returns \emph{true}.
As the number of instances is $\Oh(\log{W})$, the claim follows.
The \convsum problem is subquadratically equivalent to \textsc{3sum}~\cite{3sum1},
which establishes a relationship between these two classes of subquadratic equivalence.
\end{proof}

\section{Nondeterministic algorithm for 3SUM}
\label{sec:3sum-nondet}

Carmosino et al.~\cite[Lemma 5.8]{nondet-seth} presented an $\Oh(n^{1.5})$
nondeterministic algorithm for \textsc{3sum}, i.e., the running time depends
only on the size of the input. However, in our application, we need a running time
that is a function of the sizes of sets $A$,$B$ and $C$. In this section we
analyze the running time of Carmosino et al. in regard to these parameters.

\begin{lemma}\label{lem:3sum-nondet}
There is a nondeterministic algorithm for \textsc{3sum} with
running time \\ $\Oh(\sqrt{n_1n_2n_3}\log^2(W))$, where
$n_1 = |A|,\, n_2 = |B|,\, n_3 = |C|$ and $W$ is the maximum absolute value of integers in $A \cup B \cup C$ (we assume that $n_1 + n_2 + n_2 \le W$).
\end{lemma}
\begin{proof}
If there is a triple $(a\in A,\,b\in B,\,c\in C)$ such that $a+b=c$, then we can
    nondeterministically guess it and verify it in $\Oh(1)$ time. To prove that there is no such triple,
we nondeterministically guess the following:
\begin{enumerate}
\item a prime number $p \le prime_{\sqrt{n_1n_2n_3}}$, where $prime_i$ denotes the $i$-th prime number,
\item an integer $t(p) \le \sqrt{n_1n_2n_3}\log(3W)$, which is the number of solutions for sets \\ $(A \Mod p,\, B \Mod p,\, C \Mod p)$,
\item a set $S = \{(a_1,b_1,c_1),\dots,(a_{t(p)},b_{t(p)},c_{t(p)})\}$, where $|S| = t(p)$ and each triple $(a_i\in A,\,b_i\in B,\,c_i\in C)$ satisfies $a_1 + b_1 \equiv c_1 (\mathrm{mod}\, p)$.
\end{enumerate}

To see that for each NO-instance there exists such a proof, consider the number of false positives, that is, tuples $(a\in A,\,b\in B,\,c\in C,\,p)$, where $p$ is a prime.
For each triple $(a\in A,\,b\in B,\,c\in C)$, the value $|a+b-c|$ has at most $\log(3W)$ distinct prime divisors. Therefore, the number of false positives
is bounded by $n_1n_2n_3\log(3W)$.
Since there are $\sqrt{n_1n_2n_3}$ candidates for $p$, we can choose one such that $t(p) \le \sqrt{n_1n_2n_3}\log(3W)$.

To verify the proof, we need to verify whether $S$ contains no true solution and to compute the number of solutions for $(A \Mod p,\, B \Mod p,\, C \Mod p)$.
If it equals to $|S|$, then we are sure that all solutions for the instance modulo $p$ are indeed false positives for the original instance.
Since the numbers are bounded by $p$, we can count the solutions using FFT in time $\Oh(p\log p) = \Oh(\sqrt{n_1n_2n_3}\log^2(W))$. 
\end{proof}

\end{document}